\documentclass[aps,prd,twocolumn,superscriptaddress,nofootinbib,floatfix]{revtex4-2}
\usepackage{amsmath,amssymb,bm}
\usepackage{graphicx}
\usepackage{hyperref}
\usepackage{orcidlink}
\usepackage{booktabs}

\newtheorem{theorem}{Theorem}
\newtheorem{proposition}{Proposition}

\newcommand{\evP}{\textnormal{\textsc{[proven]}}}
\newcommand{\evM}{\textnormal{\textsc{[machine-checked]}}}
\newcommand{\evX}{\textnormal{\textsc{[measured]}}}

\begin{document}

\title{Fermionic quantum field theories as probabilistic cellular automata
in three dimensions}

\author{Piotr S. Topa\,\orcidlink{0009-0005-2870-2463}}
\affiliation{Independent researcher}

\date{\today}

\begin{abstract}
Wetterich has shown that certain probabilistic cellular automata in
$1+1$ dimensions are exactly equivalent to fermionic quantum field
theories, and posed the three-dimensional construction as an open
problem. We construct one: a layered automaton (the ``coincube'') whose
conversion blocks form the real quaternion representation of
$\mathrm{Cl}(0,3)$, controlled by binary environment fields whose
streaming outruns the carrier (``fresh tape'')---no bit is ever
consulted twice, and the quenched single-particle ensemble propagator
equals the annealed Bloch operator identically at every time, a theorem
of the model. Exact consequences: the step operator is unitary with
respect to an explicit complex structure, with the spectral analysis
its faithful complexification; the local Grassmann action is extracted
mechanically; the single-carrier excitation is a Weyl fermion with an
exactly isotropic leading cone, confirmed by gated measurement (slope
ratios $1$ within errors, the factorized-transport benchmark excluded
by $\ge10$ error bars, helicity residues with chirality $-1$); at
finite momentum the node carries a parity-odd anisotropy
$c(q)\,k_xk_yk_z/|k|$, carried by every member of the four-component
coin class (exhaustive enumeration) and cancelled exactly on the
inversion-doubled Dirac
branch, whose gap is exactly $2\arctan[q_m/(1-q_m)]$ while the
off-resonant spectators of the computed 44-point Floquet--Weyl census
stay exactly massless; a media-mediated interaction obeys
$U_g=(1-2gq^2)\,U$ exactly at every cycle. The obstructions are
charted alongside: ballistic transport admits only finite velocity
sets (proven); Abelian sign gauges leave the factorized surface
(measured over a finite gauge menu); no protected two-component node
survives a table-exhaustive search. Finally, a dichotomy for the
quaternionic event class: an isotropic cone forces
overdamping---$Q=\omega_0/\Gamma\le\pi/(2\ln2)$ at every density,
$0.66$ at the working density (proven)---while in-state unitarity
forces factorized transport, under stated hypotheses on the read
geometry and the medium. The postselected two-boundary amplitudes
evade the dichotomy, exactly unitary at all times under the fresh
tape, with the positivity of their induced weight open: the exactly
relativistic object of this construction is an amplitude whose
probabilistic reading is not settled here.
\end{abstract}

\maketitle

\section{Introduction}
\label{sec:intro}

A probabilistic cellular automaton (PCA) is a classical statistical system:
a lattice of bits, a probability distribution over bit configurations, and
one local, invertible, homogeneous update rule applied in discrete time.
Wetterich~\cite{Wetterich2022,Wetterich2022b} established that a class of
such automata is \emph{exactly} equivalent to fermionic quantum field
theories in $1+1$ dimensions: the same wave function, density matrix,
operators and expectation values, at every lattice size, with no
approximation. The equivalence rests on three pillars: occupation bits map
to fermions through a Grassmann functional integral; the step operator is a
signed permutation, hence orthogonal; and a compatible complex structure
turns orthogonal into unitary evolution. The construction in three or four
spatial dimensions was left open---in Wetterich's own assessment,
``[g]eneralizations of our formalism to three or four dimensions do
not seem to encounter problems of principle. What is not yet achieved,
however, are cellular automata that realize simple fermionic quantum
field theories with Lorentz symmetry in three or four
dimensions''~\cite{Wetterich2022}; the known four-dimensional
route~\cite{Wetterich2022c} transports Lorentz-invariant four-fermion
composites rather than single fermions.

The program sits within a wider landscape. In the \emph{unitary}
quantum-cellular-automaton and quantum-walk tradition, discrete dynamics
with postulated complex amplitudes yields Weyl and Dirac kinematics in one
to three dimensions~\cite{BialynickiBirula,Meyer,DArianoPerinotti,
BisioDAriano,Arrighi,Farrelly,Kitagawa}, and Mlodinow and Brun have proven
a no-go theorem for quantum-field limits of such automata above one
dimension together with evasions that antisymmetrize distinguishable
particles in two and three dimensions~\cite{MlodinowBrunNoGo,BrunMlodinow2D,
MlodinowBrun3D}; fermion doubling in these discrete settings remains an
active subject~\cite{DoublingQW}. The present work is in the \emph{probabilistic} class, where amplitudes
are not postulated but must emerge from classical statistics---the program of
Refs.~\cite{Wetterich2021NPB,Wetterich2022,Wetterich2022b,WetterichPotential,
KreuzkampWetterich,Wetterich2025}, with conceptual roots in 't~Hooft's
cellular-automaton interpretation~\cite{tHooft}. The obstructions proven
below (finite rays, Q-pinning) are properties of this probabilistic
class; in the unitary setting, where complex amplitudes are postulated,
the corresponding constraints do not arise in this form. The dressing
and coin mechanisms below are the response to them.

This paper solves the construction half of that problem---the free
single-fermion sector with exact relativistic kinematics at leading
order---under the strictest contract: the propagating carrier is a
single-fermion occupation number; the rule set (locality, invertibility
with the unique-jump property, homogeneity, fermion-parity conservation
per block, particle--hole symmetry, no hidden continuum input) is fixed
at the outset and every construction below is verified against it
mechanically. The Lorentz-symmetric half of Wetterich's sentence is
\emph{not} delivered beyond the leading cone: Sec.~\ref{sec:chiralaniso}
proves that within this construction class the obstruction is
structural---a parity-odd velocity anisotropy, of first order in the
lattice spacing, tied to the chirality of the node---and quantifies the
partial remedies. A single deliberate symmetry relaxation underlies
this: a lone Weyl node is chiral, so the composite rule breaks spatial
inversion, axis transpositions and time reversal by design, retaining
the proper tetrahedral rotations and the antiunitary conjugation
$U(-k)=U(k)^*$. The construction is exact in the
following sense: the model's streaming schedule is chosen so that
environment world-lines outrun the carrier's light cone, no bit is ever
consulted twice, and the quenched single-particle ensemble propagator
equals the Bloch operator theory identically at every time
(Theorem~\ref{thm:freshtape}); the closed forms of
Secs.~\ref{sec:spectrum}--\ref{sec:interaction} are therefore statements
about the automaton itself.

The results divide into obstructions and constructions, and we state the
evidence class of every claim: \evP{} for symbolic or algebraic proofs,
\evM{} for exact finite verifications (tolerances quoted), \evX{} for
measurements, which are always accompanied by a known-answer gate run
through the identical analysis pipeline. The complete verification code is
public~\cite{repo}.

Obstructions (Sec.~\ref{sec:obstructions}): ballistic one-particle
transport in this class has a \emph{finite} set of group velocities, so no
three-dimensional cone is possible (Theorem~\ref{thm:finiteray}); on the
symmetric vacuum all transport speeds are rational, while a tuned vacuum
density $q$ provides a continuous, exactly stationary knob with the closed
dressing law $v=2|1-2q|$; Abelian position-periodic sign gauges relocate
band features but cannot bend the factorized (``diamond'') group-velocity
surface into a cone; and the only two-component isotropic nodes found
are isolated fine-tuned points, every one frequency locked at
$\omega_0=\pi$ and without any dial (Theorem~\ref{thm:ntwo} and the
accompanying exhaustive search).

Constructions (Secs.~\ref{sec:coincube}--\ref{sec:interaction}): a
four-component carried coin whose conversion blocks form the real
quaternion representation of $\mathrm{Cl}(0,3)$ produces an exactly
isotropic Weyl cone; the automaton realizing it (the \emph{coincube}) is a
composition of manifestly legal layers whose fermionic lift is
verified including all sign coherences; its fresh-tape streaming makes
the quenched propagator exactly the operator theory
(Theorem~\ref{thm:freshtape}), with generic slower schedules and their
re-read corrections analyzed alongside; the step operator is exactly
unitary with respect to an explicit complex structure; the local Grassmann
action is extracted in exact rational arithmetic; a dense-sweep census
with computed topological charges charts the full gapless spectrum; the
node's finite-momentum anisotropy is derived, proven unavoidable at
four components by exhaustive enumeration, and cancelled exactly on the
doubled branch (Sec.~\ref{sec:chiralaniso});
spatial-inversion doubling yields a Dirac fermion with an exact,
continuously tunable mass for the resonant node quartet; and a
media-mediated interaction with a continuous coupling is certified, its
damping law exact at every cycle.

Finally, Sec.~\ref{sec:damping} proves the second
principal result: a dichotomy. Within the quaternionic event class, an
isotropic cone \emph{forces} a nonzero, exactly mode-blind visibility
decay (the annealed step operator factorizes as a scalar times a
unitary, Proposition~\ref{prop:factor}), bounded by
$\kappa=\pi/(2\ln2)$ on the node's phase advance per unit decay and
leaving the in-state excitation overdamped
($Q=\omega_0/\Gamma\approx0.66$ at the working density); while in-state
unitarity forces factorized transport and destroys the cone. Correlated
media are included; outside the event class the bound provably fails at
an exhibited boundary. The two-boundary (in--out)
propagators---postselected amplitudes, legitimate objects in the same
transfer-matrix formalism---evade the dichotomy, under the fresh tape
at all times: they form a one-parameter unitary family with the cone
intact,
and the square-root boundary is proven to be its unique medium-blind
member.

\subsection{Physical picture: The machine in plain terms}
\label{sec:plain_terms}

Before detailing the formal algebraic and Grassmann machinery, the physical
architecture of the automaton can be stated in elementary operational terms.

\paragraph{Lattice and state space.}
The system lives on a standard three-dimensional cubic lattice $\mathbb{Z}^3$.
At each lattice site, the dynamical variables are purely classical bits divided
into two categories:
\begin{enumerate}
\item \emph{Carrier bits}: Four occupation bits per site, corresponding to a
two-bit ``coin'' register indexing four internal channels. In the
single-particle sector, exactly one carrier bit is set across the entire
lattice. These four internal states furnish the real quaternion representation
of $\mathrm{Cl}(0,3)$ that yields a single Weyl fermion. (For a Dirac fermion,
the coin register is doubled to eight channels).
\item \emph{Environment bits}: Three autonomous boolean fields per site, one
for each spatial axis ($X, Y, Z$). There are no global bits and no infinite
auxiliary tapes.
\end{enumerate}
The initial state of the environment is drawn from a Bernoulli product measure
with density $q \in (0, 1/2)$. This random initialization is the \emph{sole
source of stochasticity} in the model. For all times $t > 0$, the evolution is
strictly deterministic, reversible, and local.

\paragraph{The update cycle.}
A single time step $t \to t+1$ consists of a sequence of local, bijective
layers cycled through the three spatial axes $a \in \{x, y, z\}$. For each axis,
three consecutive operations occur:
\begin{enumerate}
\item \emph{Convert}: At every cell, if the local environment bit for axis $a$
is $1$, a fixed signed permutation matrix $C_a$ is applied to the
four-component coin register. If the bit is $0$, the coin register is unaltered.
\item \emph{Move}: The carrier hops to the nearest neighbor ($\pm 1$ cell)
along axis $a$, with the sign of the displacement governed by its coin state.
\item \emph{Stream}: The environment bits themselves scroll via local pair-swaps
along an orthogonal axis (\emph{cross-streaming}) at a velocity exceeding the
carrier's maximum speed ($v_{\text{env}} = 6$ vs.\ $v_{\text{carrier}} \le 2$
lattice units per cycle).
\end{enumerate}

\paragraph{The two essential mechanisms.}
Two structural features bridge this classical bit-machine to relativistic
quantum field theory:
\begin{itemize}
\item \emph{The fresh-tape property}: Because environment streaming outruns the
carrier's causal cone, the carrier never encounters a previously consulted
environment bit. Consequently, every conditional conversion acts on an
independent Bernoulli bit. As an exact theorem, the quenched single-particle
ensemble propagator equals the annealed Bloch evolution identically at all
times:
\begin{equation}
T_a = (1-q)\openone + q\,C_a.
\end{equation}
The ensemble average does not approximate the automaton; it describes its exact
single-particle propagator.
\item \emph{The quaternion coin}: Ballistic hopping on a cubic grid
fundamentally produces an anisotropic diamond-shaped velocity surface
(Theorem~\ref{thm:finiteray}). Spatial movement alone cannot restore rotational
isotropy on a lattice. The coincube overcomes this by internal rotation: the
conversion operators $C_x, C_y, C_z$ are chosen to satisfy the quaternion
algebra $\{C_a, C_b\} = -2\delta_{ab}\openone$. The resulting anticommutative
cross-terms cancel directional lattice artifacts near the band degeneracy,
yielding an isotropic linear dispersion $\omega = v|\bm{k}|$ with
$v = 2|1-2q|$.
\end{itemize}

Throughout, the lattice spacing and the duration of one full update cycle
are unity; frequencies are per cycle and momenta are in lattice units.

\section{Setup}
\label{sec:setup}

\subsection{States, step operator, lifts}

Configurations $\tau$ are assignments of occupation bits to species at the
sites of a cubic lattice. The probabilistic information is a real wave
function $q_\tau$ with $p_\tau=q_\tau^2$~\cite{Wetterich2022}. One time
step applies a deterministic, invertible map of configurations together
with a sign gauge: the step operator $\hat S$ is a \emph{signed
permutation} of the configuration basis, hence real orthogonal. A rule is
\emph{legal} if it is a composition of layers, each of which is (i) local,
(ii) a bijection on configurations with a unique image (the unique-jump
property), (iii) homogeneous, and (iv) fermion-parity even on its support,
so that a fermionic lift exists. The lift assigns the signs: occupation
bits map to fermionic modes in a fixed Jordan--Wigner ordering, and each
layer's signed permutation must equal the matrix of an explicit fermionic
operator in the occupation basis. All lifts below are verified at this
level, including two-particle sign coherence (Sec.~\ref{sec:coincube}).

\subsection{Environment fields and dressed carriers}

Besides the carrier species, the automata below contain autonomous binary
\emph{environment} fields: one per spatial axis (density $q$), plus one for
the mass sector ($q_m$) and one for the interaction ($g$). Environment
dynamics is free streaming by pair-swap layers, applied in fast
phase-continuing batches---the \emph{fresh-tape} schedule defined in
Sec.~\ref{sec:coincube} and used in Sec.~\ref{sec:freshtape}; a
Bernoulli product measure is exactly stationary under any pair-swap
schedule. Carriers never modify the environment except through
the explicitly constructed interaction layer of Sec.~\ref{sec:interaction}.
The vacuum is the product of the carrier vacuum with the Bernoulli
environment measure: translation invariant, clustering, with a continuous
parameter $q$ that is the central dial of the construction.

\section{Obstructions}
\label{sec:obstructions}

\begin{theorem}[Finite rays]\label{thm:finiteray}
For any legal rule with finitely many species per site that conserves
particle number in the one-particle sector---the sector with exactly one
occupied bit across all species on an otherwise empty lattice---the
one-particle Bloch operator is a monomial matrix for every momentum $k$. Consequently every dispersion branch is exactly linear in
$k$, the set of group velocities is finite, and no three-dimensional cone
$\omega=v|k|$ is possible ballistically. \evP{}
\end{theorem}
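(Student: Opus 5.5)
The plan is to track what a legal rule does to one-particle configurations and then read off the Bloch operator. In the one-particle sector a configuration is a pair $(x,s)$: a site $x\in\mathbb{Z}^3$ and one of the $N$ species at that site. Each layer is a bijection on configurations with a unique image, it conserves particle number, and it carries a sign from the lift. So it sends the basis vector $|x,s\rangle$ to $\pm|x',s'\rangle$ for a single target $(x',s')$. Homogeneity then fixes the form: the displacement $d=x'-x$, the target species $s'=\pi(s)$ and the sign $\epsilon(s)$ can depend only on $s$, not on $x$. This is the step that needs care, because it is where the absence of the environment enters. The theorem concerns the rule with no environment configuration to condition on, i.e.\ a purely translation-invariant rule. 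If a finite periodic structure is present, I would enlarge the unit cell so that every layer becomes translation invariant with finitely many species per cell. Locality ensures $|d(s)|$ is bounded, so the one-step operator is a finite sum of shifts.

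Next I would Fourier transform. Restricting a layer to the one-particle sector gives $|x,s\rangle\mapsto\epsilon(s)|x+d(s),\pi(s)\rangle$. At momentum $k$ its Bloch matrix is $B(k)_{\pi(s),s}=\epsilon(s)e^{-ik\cdot d(s)}$, with every other entry zero. Since $\pi$ is a permutation of the finitely many species, this is a monomial matrix: a permutation matrix times a diagonal matrix of unimodular phases. Products of monomial matrices are monomial, so the composite step $B(k)=\prod_\ell B_\ell(k)$ is monomial with some permutation $\sigma$. Its nonzero entries are $\pm e^{-ik\cdot D(s)}$, where $D(s)$ is the displacement accumulated along the species path and is an integer vector.

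The spectrum then comes from the cycle decomposition of $\sigma$. Take a cycle of length $L$ through species $s_1,\dots,s_L$. On this block, $B(k)^L$ is the scalar $\eta\, e^{-ik\cdot\Delta}$, with sign $\eta=\pm1$ and $\Delta=\sum_j D(s_j)\in\mathbb{Z}^3$. The eigenvalues are therefore the $L$-th roots, $e^{-i\omega}$ with
\[
\omega(k)=\frac{k\cdot\Delta}{L}+\frac{\arg\eta+2\pi m}{L},\qquad m=0,\dots,L-1.
\]
Every branch is thus exactly affine-linear in $k$, with group velocity $\nabla_k\omega=\Delta/L$. There are at most $N$ cycles, so the set of group velocities is finite and consists of rational vectors. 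A cone $\omega=v|k|$ would need the gradient to take a continuum of directions $v\,k/|k|$ on the unit sphere, which is impossible with finitely many constant velocities. The same conclusion holds for wave packets: the position distribution splits into finitely many ballistic rays $x\approx t\Delta/L$.

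The main obstacle is justifying the monomial form layer by layer. I must check that particle-number conservation together with the unique-jump property rules out a single occupied bit mapping to a superposition. It does, because $\hat S$ is a signed permutation of configurations. I must also check that the Jordan--Wigner signs, which could a priori depend on other occupations, reduce to a function of $s$ alone in the one-particle sector. They do: the lattice is otherwise empty, so no string of other occupied modes contributes a sign. The rest is bookkeeping.
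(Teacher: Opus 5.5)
Your proposal is correct and follows essentially the same route as the paper's proof. The one-particle restriction is a signed permutation composed with site shifts, so the Bloch matrix is monomial; its characteristic polynomial factorizes over the cycles with $\lambda^{\ell}=\sigma e^{ik\cdot D}$; and every branch is linear with velocity $D/\ell$ drawn from a finite set.

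Your unit-cell enlargement for block-periodic (origin-alternating) layers is a useful refinement that the paper leaves implicit. One small adjustment: the hypothesis says the \emph{rule} conserves particle number, and an individual parity-even layer need not. So run the argument on the composite $\hat S$ directly, as your last paragraph effectively does, rather than factoring $B(k)$ layer by layer.
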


The proof is elementary---a signed permutation with momentum phases has
monomial matrix elements, and eigenvalues of monomial matrices are roots of
single monomials---but the consequence is structural: the error is
scale invariant, so no continuum limit removes it, and periodic cycles of
rotated rules do not evade it because products of unique-jump operators are
unique-jump. In one spatial dimension the light cone genuinely consists of
two rays, which is why the $1+1$-dimensional construction exists. An
adversarial test suite (seven evasion strategies with re-seeded variants,
plus reconstruction tests that fit the Bloch operator from real-space
propagation rather than constructing it) accompanies the theorem in the
repository. The same suite verifies how the construction of this paper
relates to the theorem: at fixed media the coincube cycle is a signed
permutation of the (site, channel) basis but not homogeneous, so the
theorem does not apply per realization---and indeed every realization
retains a finite velocity set. The physical configuration carries
$O(qL^3)$ occupied environment bits, so the propagating object is a
carrier dressed by an extensive environment, not the theorem's
one-particle state, and the cone exists only in the media-averaged
propagator, which is not monomial. The construction evades the
theorem's hypotheses, not its conclusion.

Two further results constrain the construction. First, on the symmetric
(density-$\tfrac12$) vacuum the transport speeds of the
autonomous-streaming subclass of conditional rules~\cite{Wetterich2022b}
are rational \evP{} (frozen-vacuum and fully coupled rules are outside
this statement); the vacuum density $q$ is
the legal continuous parameter, with the closed dressing law, for the
representative rule used throughout,
\begin{equation}
v(q) = 2\,|1-2q| ,
\label{eq:vq}
\end{equation}
proven as a two-class Markov-additive limit and verified against exact ring
enumeration as polynomial identities in $q$ \evP{}. Second, dressing alone
yields a factorized walk whose group-velocity surface is the $\ell^1$ ball
(the \emph{diamond}): slope ratios along $(110)$ and $(111)$ relative to
$(100)$ equal $\sqrt2$ and $\sqrt3$. Abelian position-periodic sign gauges
(Kogut--Susskind axis phases~\cite{KogutSusskind,Susskind}, per-stall signs, and
their products) relocate the nodal points and modify residues but leave the
ratios on the diamond, measured as
$r_{110}/r_{111}=1.40/1.68$ (base and stall gauges, identical spectra) and
$1.46/1.82$ (KS gauges) \evX{}---an early ungated instrument whose absolute
anchor deviates $3.5$--$5.5\%$ from the exact operator, quoted only for the
on-diamond pattern; a carried non-Abelian structure is therefore necessary.

\begin{theorem}[Two-component exclusion]\label{thm:ntwo}
Consider legal automata whose carrier has a two-dimensional internal
space: per-axis cycles of the monomial factors $(1-p)E_a+pC_a$ or
$E_a[(1-p)\openone+pC_a]$, where $E_a=\mathrm{diag}(e^{\pm ik_a})$ and
$C_a$ is a real signed permutation, with one or two (blocked)
engagements per axis.
At every conversion weight $p\neq\tfrac12$ the single-engagement
cycles have no degenerate propagating pair at any momentum (rotation
coins force $(1-p)^2=p^2$; reflection coins never produce one; diagonal
coins degenerate). \evP{}
\end{theorem}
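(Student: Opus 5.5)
The plan is a finite case analysis over the coin. Up to overall sign, the real $2\times2$ signed permutations fall into three classes: the \emph{diagonal} coins $\pm\openone$, $\pm\sigma_z$; the \emph{reflection} coins $\pm\sigma_x$; and the \emph{rotation} coins $\pm J$ with $J=-i\sigma_y=\bigl(\begin{smallmatrix}0&-1\\1&0\end{smallmatrix}\bigr)$. The cycle operator is $U(k)=\prod_a M_a$, with $M_a=(1-p)E_a+pC_a$ or $M_a=E_a[(1-p)\openone+pC_a]$.

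First I reduce the two factor forms to one. Writing $E_a[(1-p)\openone+pC_a]$ and using cyclic invariance of trace and determinant, the degeneracy condition $(\operatorname{tr}U)^2=4\det U$ is invariant under cyclic reordering of the factors. The second form therefore differs from the first only by the placement of the $E_a$ relative to the mixing matrix, and I treat both with the same parametrization. Throughout, ``degenerate propagating pair'' means a double eigenvalue $\lambda$ with $|\lambda|=1$ (undamped). Two conditions must then be imposed simultaneously:
\[
(\operatorname{tr}U)^2=4\det U,\qquad |\det U|=1 .
\]

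\textbf{Diagonal coins.} Here $C_a$ commutes with $E_a$, so $U(k)$ is diagonal with entries that are products of scalars $(1-p)e^{\pm ik_a}\pm p$. The two components never mix, and the carrier reduces to two decoupled scalar walks. Any coincidence of eigenvalues is an accidental crossing of uncoupled bands, not a propagating pair of a two-component node; this is what ``diagonal coins degenerate'' records.

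\textbf{Rotation coins.} Here each factor is a real quaternion, $M_a=(1-p)\cos k_a\,\openone+(1-p)\sin k_a\,(i\sigma_z)\pm p\,J$, so $U$ is a quaternion $u_0\openone+\vec u\cdot(i\vec\sigma)$. Its eigenvalues $u_0\pm i|\vec u|$ coincide only if $\vec u=0$, i.e.\ only if $U$ is scalar. Its norm is $\prod_a[(1-p)^2+p^2]$. I would write $U$ as this norm times an $SU(2)$ element, which converts the degeneracy into the statement that a product of three explicit rotations is $\pm\openone$.
\begin{itemize}
\item Each rotation has a fixed $\sigma_y$-tilt angle $\theta$ with $\tan\theta=p/(1-p)$, and a $\sigma_z$-rotation by $k_a$.
\item The sequence is $z$-rotation, tilt, $z$-rotation, tilt, $z$-rotation, tilt. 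Closing it to $\pm\openone$ requires the tilts to cancel pairwise against the free $k_a$ rotations.
\end{itemize}
Solving this trigonometric system is the main obstacle. I would attack it by evaluating the $i\sigma_x$ component of the vector part, which is a sum of terms each carrying an odd power of $\sin\theta\cos\theta$. Then I would show that its simultaneous vanishing with the other two components forces $\cos2\theta=0$, i.e.\ $(1-p)^2=p^2$. This should be checked with a symbolic elimination in $\cos k_a,\sin k_a$. Away from $p=\tfrac12$ the system has no solutions, so no degenerate pair exists at any momentum.

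\textbf{Reflection coins.} Here $\det M_a=(1-p)^2-p^2$ up to sign. The factors are no longer quaternions: $(1-p)E_a\pm p\sigma_x$ is a split-quaternion, i.e.\ an $SU(1,1)$-type element times a scalar. Eigenvalue degeneracy then requires $U$ to be parabolic or scalar. Combined with $|\lambda|=1$, i.e.\ $|\det U|=|(1-p)^2-p^2|^3=1$, this is impossible for $p\in(0,1)$, $p\neq\tfrac12$, because $|1-2p|<1$. This rules out a propagating pair outright, without solving for $k$. Finally, I would note that the same bookkeeping applies verbatim to the second factor form, and would verify the three cases mechanically over the finite coin list.
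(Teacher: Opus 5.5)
There is a genuine gap, and it begins with your reading of ``degenerate propagating pair.'' You impose $|\lambda|=1$, but in-state mixture weights lie on the damped chord. In both factor forms, $|\det M_a|=(1-p)^2+p^2<1$ for rotation coins and $|\det M_a|=|1-2p|<1$ for reflection coins, at every $p\in(0,1)$.

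Your condition therefore excludes every case. That includes the $p=\tfrac12$ node the theorem deliberately exempts, which the paper exhibits with $|\lambda_0|=2^{-3/2}$. So the hypothesis $p\neq\tfrac12$ plays no role, and your reflection-coin argument shows only the universal damping. The diagonal case has the opposite problem: for $C=\openone$ you get $U(0)=\openone$, which meets your criterion, and you set it aside by changing the definition.

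What the theorem excludes is an isotropic linear node. The paper's argument runs as follows:
\begin{enumerate}
\item A node forces $U(k_0)=\lambda_0\openone$.
\item The splitting generators are conjugated copies of the involution $Z=\mathrm{diag}(1,-1)$, so their anticommutators are automatically scalar ($2\times2$ Cayley--Hamilton).
\item The required bilinear orthogonality $\{h_a,h_b\}=0$ between adjacent axes reduces to $\{Z,TZT^{-1}\}=0$. For $T=\bigl(\begin{smallmatrix}a&b\\ c&d\end{smallmatrix}\bigr)$ this is $ad+bc=0$, i.e.\ $T_{00}T_{11}=-T_{01}T_{10}$. It is independent of $k$ because the diagonal phases cancel.
\item With $T=(1-p)\openone+pC$ this gives $(1-p)^2=p^2$ for rotation coins, the impossible $(1-p)^2+p^2=0$ for reflection coins, and $T_{00}T_{11}=0$ (a singular factor) for diagonal coins.
\end{enumerate}

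Your rotation-coin step would fail outright, because scalarity of $U$ does not force $\cos2\theta=0$. In $SO(3)$ the scalar condition reads $R_z(\phi_3)R_y(2\theta)R_z(\phi_2)R_y(2\theta)R_z(\phi_1)R_y(2\theta)=I$. After a cyclic conjugation, in either factor form, the $\phi_i$ are independent linear combinations of $k_x,k_y,k_z$.

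Rewrite the condition as $R_y(2\theta)R_z(\phi_2)R_y(2\theta)=R_z(-\phi_3)R_y(-2\theta)R_z(-\phi_1)$. Equating the polar angle of the image of $e_z$ on the two sides requires only $\cos\phi_2=-\cos2\theta/(1+\cos2\theta)$. This is solvable whenever $\cos2\theta\ge-\tfrac12$, and the Euler decomposition then supplies $\phi_1$ and $\phi_3$.

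So for every $p\le\tfrac12$ the vector part of $U$ vanishes at some momentum. The rotation-coin walk has genuine degeneracies of a normal operator at generic weight, generically with real linear splitting; they are simply anisotropic. No elimination in $\cos k_a,\sin k_a$ can extract $(1-p)^2=p^2$ from degeneracy alone. The missing ingredient is the orthogonality condition on the generators at the node, which is where the weight enters.
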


The remaining cases are closed by search---exhaustive in the discrete
tables, heuristic in momentum, sampled in weight---not by proof.
The only isotropic two-component Weyl nodes found in this class are
\emph{isolated fine-tuned points}: the single-origin placement at
exactly $p=\tfrac12$
($\omega_0=\pi$, $v=1$, $\chi=-1$, $|\lambda_0|=2^{-3/2}$, all forced,
width $\Gamma=\tfrac32\ln2$ at the chord midpoint---the maximally
damped point of Theorem~\ref{thm:qbound}) \evP{}; and the blocked
two-origin cycles at numerically located tuned weights ($p^*=0.3300$
additive, $p^*=0.3204$ multiplicative; anisotropy zero within
extrapolated residuals $\le8.5\times10^{-4}$), both likewise locked at
$\omega_0=\pi$ and heavily damped, with computed chiralities $\chi=-1$
and $\chi=+1$ respectively \evM{}. At the sampled weights
$p\in\{0.15,0.25,0.35,0.5\}$ every design has a strictly positive
anisotropy floor \evM{}; the floors necessarily dip toward zero on
approach to the tuned $p^*$, which lie between the sampled weights.
(In the \emph{unitary} walk setting, where amplitudes are postulated,
two-component three-dimensional constructions
exist~\cite{Chandrashekar,DArianoErba}; the exclusion here is specific
to the probabilistic class.)

Two-component constructions therefore admit no \emph{protected}
relativistic carrier: every isotropic node found is an isolated point in
parameter space, its quasienergy locked at $\omega_0=\pi$---no mass dial,
no width dial, no vacuum-density dressing. The minimal internal space
with a symmetry-protected node on an entire parameter family is
four-dimensional and real, where the Clifford algebra $\mathrm{Cl}(0,3)$
has its quaternionic representation; that is the construction of the next
section.
\section{The coincube automaton}
\label{sec:coincube}

\subsection{Layers}

The carrier has four species per site: a two-bit coin $(b_1,b_2)$, channel
index $c=2b_1+b_2$. Writing $X$, $Z$ for the real Pauli matrices and
$XZ$ for their (antisymmetric) product, define the conversion triple and
direction assignments
\begin{align}
C_x &= XZ\otimes\openone, & d_x &= \mathrm{diag}(Z\otimes\openone),
\nonumber\\
C_y &= Z\otimes XZ, & d_y &= \mathrm{diag}(Z\otimes Z),
\nonumber\\
C_z &= -X\otimes XZ, & d_z &= \mathrm{diag}(\openone\otimes Z).
\label{eq:tables}
\end{align}
The $C_a$ are pairwise anticommuting with $C_a^2=-\openone$: the real
quaternion representation of $\mathrm{Cl}(0,3)$ \evM{}. Each full cycle
applies, for each axis $a$ and each of two block origins, three layers:

\emph{L2 (conversion).} At every site, controlled on the axis-$a$
environment bit at that site, the two channel pairs of $C_a$ are swapped
(a one-site block; two bit flips, parity even).

\emph{L1 (motion).} Channel $c$ translates by $d_a(c)\in\{\pm1\}$ along
axis $a$, unconditionally.

\emph{L3 (environment).} The axis-$a$ environment field streams along
$s_a=a{+}1\ (\mathrm{mod}\ 3)$: each axis-$a$ substep applies
\emph{three} consecutive pair-swap layers to the field, with origins
that continue the field's global alternation phase (the $n$-th swap ever
applied has origin $n\bmod2$, counted across substeps and cycles).
Each layer is the same certified environment layer; the schedule is
deterministic, homogeneous and autonomous. Phase continuation is
required: a schedule restarting each batch at origin $0$
repeats an origin at the batch boundary and the repeated pair cancels
(measured transport $0$ per batch, versus exactly $3$ for the
phase-continuing batch) \evM{}. Under this schedule every environment
bit moves at exactly $\pm6$ sites per cycle along $s_a$, in a direction
fixed at $t=0$ by its parity class, while the carrier moves at most $2$
sites per cycle per axis---the environment outruns the carrier, with the
consequences of Sec.~\ref{sec:freshtape}. Two design constraints are
mandatory. \emph{Cross-streaming} ($s_a\neq a$): if the field streams
along its own axis, a co-moving carrier re-reads its own bit,
conversions arrive in nearly scalar bursts ($C_a^2=-\openone$), and the
measured Weyl pair splits into one damped-real and one oscillating
pole---an exceptional-point transition of the damped band structure in
the sense of non-Hermitian topology~\cite{Bergholtz}---and the node is
destroyed \evX{}. The second is even linear size $L$ (enforced in
code); all statements below assume even $L$.

Each layer is manifestly legal; the composition defines the automaton.

\subsection{Fermionic lift and sign coherence}

The lift of L2 is an environment-controlled pair of $\pm90^\circ$ Givens
rotations $G=\exp[\tfrac{\pi}{2}(a^\dagger_c a_{c'}-a^\dagger_{c'}a_c)]$
acting on same-site modes: a signed permutation of the Fock basis,
parity-even, identity at empty control, with single-particle matrix equal
to the corresponding signed-swap block of $C_a$ and $G^2=-\openone$ on the
one-particle sector; the quaternion sign structure \emph{is} the fermionic
rotation sign \evM{} (dense verification, all axes). Translations and
environment swaps lift with the standard reordering parity. The composite
cycle is sign coherent: in the segregated mode ordering (all carrier modes
before all environment modes) the full lifted cycle is a signed permutation
whose one-particle sector equals the stated rule and whose two-particle
sector equals the exact antisymmetrized square $\Lambda^2 M_1$---verified
in genuine three-dimensional geometry on all 5778 pair states at linear
size $L=3$, with mutation controls (deliberately corrupted sign rules fail
the check) \evM{}. At $L=2$ the two
sub-steps per axis make net translation crossings even, so $L=2$ is blind
to translation-sign errors; three-dimensional sign claims must be checked
at $L\ge3$.

\section{The fresh-tape property}
\label{sec:freshtape}

The environment schedule of L3 was chosen so that no bit is ever
consulted twice:

\begin{theorem}[Fresh tape]\label{thm:freshtape}
Consider the coincube in the single-particle sector with the L3 schedule
of Sec.~\ref{sec:coincube} and the Bernoulli($q$) product vacuum.
(a)~On the infinite lattice, no conversion history reads the same
physical environment bit twice, at any time. Consequently the quenched
ensemble propagator equals the annealed propagator \emph{identically},
$G_{\rm quenched}(t)=G_{\rm annealed}(t)$ for all $t$: every closed form
derived from the annealed Bloch operator $U(k)$---the node census, the
propagator factorization, the linear conversion law, the mass law---is
exact for the automaton itself. (b)~On the $L^3$ torus ($L$ even) the
same holds for every evolution of $T\le\lceil L/8\rceil$ cycles, and
this horizon is sharp: the first wrap-induced re-read joins two reads
separated by exactly $2\lceil L/8\rceil$ of the field's own axis-$a$
substeps. \evP{}, \evM{}
\end{theorem}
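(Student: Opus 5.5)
We reduce (a) to a kinematic statement about world-lines and then apply independence. First the environment kinematics. Under the phase-continuing L3 schedule, each axis-$a$ field undergoes alternating brick-wall pair swaps along $s_a$. A bit at position $x$ is swapped to $x\pm1$ at every layer, with the sign fixed by the parity of $x$ relative to the current origin. Because the origin alternates globally, the sign is preserved from layer to layer. So every bit travels ballistically, one site per swap layer, in a direction set at $t=0$ by its parity class. Each axis-$a$ substep contains three such layers, and there are two block origins per cycle, so the speed is $6$ sites per cycle along $s_a$. Motion along the two other coordinates is zero. We take this as the exact transport already recorded in Sec.~\ref{sec:coincube} and verify it by induction on the swap count $n$ (origin $n\bmod 2$).

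Next the re-read analysis. A conversion history is a carrier path together with the sequence of (site, substep) pairs at which an axis-$a$ bit is read. Reads of different axes involve different fields, so they never coincide. Suppose two reads of the same axis-$a$ field occur at substeps $n_1<n_2$ (counted in that field's own substeps) and concern the same physical bit. Between them the bit has moved exactly $3(n_2-n_1)$ sites along $s_a$. The carrier's $s_a$-coordinate changes only during axis-$s_a$ motion layers, by $\pm1$ each. In $n_2-n_1$ axis-$a$ substeps there are at most $n_2-n_1$ axis-$s_a$ motion layers interleaved, plus possibly one boundary layer. Hence the carrier moves at most $n_2-n_1+1$ sites along $s_a$. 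The delicate point is ordering within a cycle (L2, L1, L3 per axis). We must check that $3m > m+1$ for all $m\ge1$ once the boundary layer is counted correctly. The case $m=1$ with the boundary layer included gives $3>2$, so the inequality holds.

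A read of the same bit would require the carrier's $s_a$-displacement to equal the bit's, since their transverse coordinates also have to match. That is impossible, which gives the first claim of (a). Independence then follows. Condition on a carrier path; it is determined by the bits read, which are distinct coordinates of the Bernoulli product measure. The path weight therefore factorizes as a product over reads of $(1-q)$ or $q$ times the appropriate $\openone$ or $C_a$. Summing over bits gives the ordered product of the one-step averaged operators $T_a=(1-q)\openone+qC_a$ interleaved with the translations. This is exactly the annealed Bloch propagator $U(k)^t$ in real space. The closed forms built on $U(k)$ inherit exactness immediately.

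For (b), on the torus the separation $3m$ must be compared with $m+1$ modulo $L$. A re-read now requires $3m \equiv \pm(\text{carrier displacement}) \pmod L$, and the smallest solution comes from wrapping. The relative displacement $3m - (\text{carrier}) \ge 2m-1$ must reach $L$, roughly, with the exact parity and alternation bookkeeping fixing the threshold. The main obstacle is getting this count exact rather than approximate. Both the environment and the carrier parities are constrained ($L$ even, parity classes), and this is presumably what produces $m=2\lceil L/8\rceil$ and hence $T\le\lceil L/8\rceil$ rather than a naive $L/4$ estimate. The per-cycle bit speed $6$ and the per-cycle carrier speed along one axis, $2$, give a relative speed of $8$, which matches $L/8$. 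I would derive the bound from the relative-speed argument and then establish sharpness by exhibiting an explicit carrier path and bit that realize the re-read at $2\lceil L/8\rceil$ substeps. I would confirm both by exhaustive enumeration at small even $L$ (the \evM{} part).
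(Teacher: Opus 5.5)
\textbf{Part (a).} Your argument follows the paper's proof. Each bit is transported ballistically by exactly $3$ sites per axis-$a$ batch. Between two axis-$a$ reads the carrier makes at most $\Delta n+1$ steps along $s_a$, which you can read off the periodic word $AASSXX$. Since $3\Delta n>\Delta n+1$, no re-read occurs. Your history-by-history factorization over distinct Bernoulli coordinates is the step the paper compresses into ``consequently'', and it is sound.

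\textbf{Part (b): the gap.} You do not derive the horizon. You read $L/8$ off a relative speed of $8$ per cycle and defer the ``parity bookkeeping''. That bookkeeping is the whole content of (b), and the ingredients you do carry over give the wrong threshold. (Also, it is the \emph{largest} relative displacement, not the smallest one $2m-1$, that must reach $L$.) With your bound $|\Delta y|\le\Delta n+1$, the first candidate wrap is the smallest $\Delta n$ with $4\Delta n+1\ge L$. At $L=12$ that is $\Delta n=3$ (bit moves $9$, carrier moves $-3$). This would limit re-read-free evolutions to $T=1$ instead of $\lceil 12/8\rceil=2$.

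\textbf{The missing idea: odd separations never produce a re-read.}
\begin{enumerate}
\item Between two axis-$a$ reads $\Delta n$ batches apart, the carrier makes exactly $\Delta n$ unconditional $\pm1$ moves along $a$ itself, while the bit never moves along $a$. Returning modulo an even $L$ forces $\Delta n$ to be even.
\item For even $\Delta n$ the word gives exactly $\Delta n$ intervening $s_a$-steps. So $\Delta y$ is even with $|\Delta y|\le\Delta n$, and $3\epsilon\Delta n\equiv\Delta y\pmod L$ requires a positive multiple of $L$ among the even integers in $[2\Delta n,4\Delta n]$.
\item The smallest even $\Delta n$ with $4\Delta n\ge L$ is $2\lceil L/8\rceil$, and its interval does contain a multiple of $L$.
\item A $T$-cycle run contains axis-$a$ read pairs of even separation at most $2T-2$. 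It is therefore re-read-free iff $2T-2<2\lceil L/8\rceil$, i.e.\ $T\le\lceil L/8\rceil$.
\end{enumerate}

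\textbf{Sharpness.} Sharpness also needs a realizing history, not only the count. Your explicit-path plan works here. Each conversion flips the sign of the move that immediately follows it: $C_x$ flips $b_1$, $C_y$ flips $b_2$, and $C_z$ flips both, matching $d_x$, $d_y$, $d_z$. Hence every $\pm1$ move sequence is a conversion history. The paper instead machine-checks $\Delta n^*$ on the real tapes at $L=12,16,24,32$, and by full path enumeration at $L=16$.
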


The proof (Appendix~\ref{app:proofs}) is a kinematic argument on three
finite lemmas; every lemma and the sharpness
of (b) are machine-checked, and the equality in (a) is verified as an
exact path sum over \emph{all} conversion histories, with physical bit
labels tracked through the real streaming permutations: all four launch
channels at $T=3$ ($2^{18}$ histories each, deviation
$2.8\times10^{-17}$) and channel $0$ at $T=4$ (all $2^{24}$ histories,
deviation $9.1\times10^{-17}$) \evM{}. The scope is every observable
linear in the evolved field---amplitudes, propagators, and the spectra
derived from them; quadratic estimator statistics (variances of
paired-media estimators) still distinguish quenched from annealed
ensembles, and nothing in the theorem concerns them.

\subsection{Generic schedules and re-read corrections}
\label{sec:generic}

The fresh-tape property belongs to the schedule, not to the layer type.
The generic single-swap schedule (one pair-swap layer per axis substep,
origins alternating per substep) transports bits at $\pm2$ sites per
cycle on the two parity sublattices---pair-swap streaming
counter-propagates; it is not a unit shift---which exactly matches the
carrier's maximal transverse speed: carrier and bit can co-move, and the
same bit is consulted again after one or two cycles. In the exact
$T=3$ path sum this produces $13\,654$ re-read branch events and a
quenched-vs-annealed propagator deviation of $1.7\times10^{-3}$; a
separate twenty-cycle kinematic certificate counts $380$ re-read pair
classes for this schedule against zero for the fresh-tape schedule
\evM{}. At production scale
the re-read corrections are directly measurable: under the generic
schedule the quenched node parameters shift from the operator values by
$+2.8\%$ ($|\lambda_0|$) and $+8.6\%$ ($\omega_0$) at $q=0.08$ (of
which only $+0.3\%$/$+2.3\%$ is instrumental, from the gate row), and
the interaction damping law acquires excursions beyond the first cycle
($2$--$2.5\%$ at $g=0.3$, up to $6\%$ at $g=0.6$; measured at $L=12$
inside the generic run's own wrap regime, so re-read and torus-wrap
contributions are not separated there) \evX{}. Under the fresh-tape schedule every one of these
corrections vanishes identically (Theorem~\ref{thm:freshtape}); the
measured comparison is Fig.~\ref{fig:schedules}.

\begin{figure*}[!t]
\includegraphics[width=0.85\textwidth]{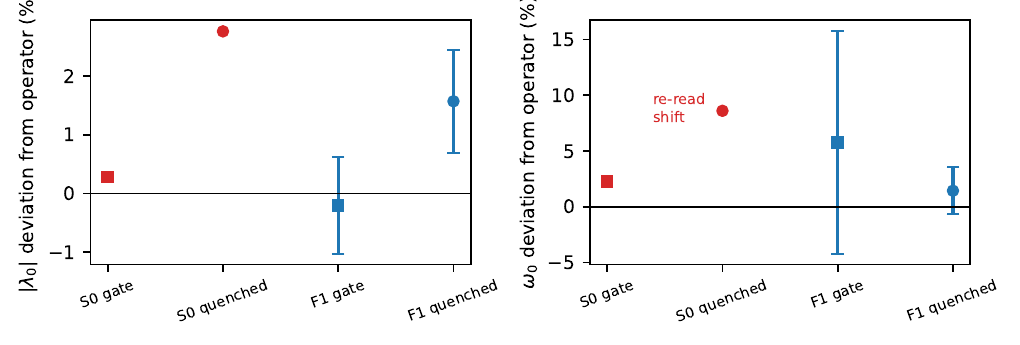}
\caption{The fresh-tape identity at scale: measured quenched node
modulus (left) and frequency (right) as deviations from the exact
operator values, under the generic schedule (red; re-read shifts, gate
rows shown for the instrumental part) and under the fresh-tape schedule
(blue, with block-jackknife errors). The generic-schedule points predate
per-node jackknife storage and are shown as central values; their
instrument's offsets are the gate points. \evX{}}
\label{fig:schedules}
\end{figure*}

\section{Free spectrum: the Weyl cone}
\label{sec:spectrum}

\subsection{Closed forms}

By Theorem~\ref{thm:freshtape} every engagement reads a fresh
Bernoulli($q$) bit, so the one-particle transfer per sub-step is exactly
$T_a(k)=E_a(k_a)\,[(1-q)\openone+qC_a]$ with
$E_a=\mathrm{diag}(e^{ik_ad_a})$, and the cycle operator is
$U(k)=T_z(k_z)^2T_y(k_y)^2T_x(k_x)^2$ (axes applied in the order
$x,y,z$; the closed forms below hold for either orientation of the
cycle, the census numbers are specific to this one): the operator
theory of this section is the
automaton's single-particle dynamics, not an idealization of it. (The
same operator arises for any schedule under independent
per-engagement reads---the annealed average---which is how the generic
schedules of Sec.~\ref{sec:generic} are compared against it.)

\begin{proposition}[Scalar--unitary factorization]\label{prop:factor}
Each conversion block $C_a$ is real, antisymmetric and orthogonal, so
$T_a(k)^\dagger T_a(k)=[(1-q)^2+q^2]\,\openone$ identically in $k$;
hence
\begin{equation}
U(k)=\rho^6\,V(k),\qquad \rho^2=(1-q)^2+q^2,
\label{eq:factor}
\end{equation}
with $V(k)$ exactly unitary for every $k$. \evP{}
\end{proposition}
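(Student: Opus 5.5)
The plan is to reduce everything to a single-block computation and then multiply. First I verify the three structural facts about each $C_a$ in Eq.~\eqref{eq:tables}. Reality is manifest, since $X$, $Z$ and $XZ$ are real Pauli products. Orthogonality follows because each $C_a$ is a Kronecker product of signed permutation matrices, hence a signed permutation, so $C_a^{\mathsf T}C_a=\openone$. Antisymmetry then follows from $C_a^2=-\openone$, which is part of the quaternion statement already recorded after Eq.~\eqref{eq:tables}: combining it with orthogonality gives $C_a^{\mathsf T}=C_a^{-1}=-C_a$. As an independent cross-check, $(XZ)^{\mathsf T}=ZX=-XZ$, and each $C_a$ contains exactly one $XZ$ tensor factor while its other factor is symmetric, so each $C_a$ is antisymmetric directly.

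Next I compute $T_a^\dagger T_a$. The phase matrix $E_a(k_a)=\mathrm{diag}(e^{ik_ad_a})$ is diagonal with unimodular entries for real $k$, so $E_a^\dagger E_a=\openone$ and it drops out. It remains to expand
\[
[(1-q)\openone+qC_a]^\dagger[(1-q)\openone+qC_a]=(1-q)^2\openone+q(1-q)(C_a^{\mathsf T}+C_a)+q^2C_a^{\mathsf T}C_a,
\]
where $C_a^\dagger=C_a^{\mathsf T}$ because $C_a$ is real. The cross term vanishes by antisymmetry, and orthogonality turns the last term into $q^2\openone$. This gives $\rho^2\openone$ with $\rho^2=(1-q)^2+q^2$, independent of $k$ and of the axis.

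Then I define $W_a(k)=T_a(k)/\rho$, which is unitary by the previous step. Since the paper defines $U(k)=T_z^2T_y^2T_x^2$, this is a product of six such factors, so $U=\rho^6\,W_z^2W_y^2W_x^2$. The product $V(k)=W_z^2W_y^2W_x^2$ is unitary as a product of unitaries. The argument uses no ordering property, so it holds for either orientation of the cycle. Note that $\rho>0$ for every $q$, so the division by $\rho$ is legitimate.

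There is no substantive obstacle here. The only point needing care is confirming antisymmetry for each of the three explicit blocks rather than only asserting it; the route through $C_a^2=-\openone$ together with orthogonality settles it cleanly. I would also remark that the identity requires real $k$; for complex momenta $E_a$ is no longer unitary. Since the Bloch operator is defined for real $k$, this is a harmless caveat.
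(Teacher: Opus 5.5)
Your proposal is correct and is essentially the paper's own argument: the paper expands $(\alpha\openone+\beta C)^\dagger(\alpha\openone+\beta C)=(\alpha^2+\beta^2)\openone+\alpha\beta(C+C^T)$ and kills the cross term by antisymmetry, as you do. Your extra steps spell out what the paper leaves implicit: that $E_a$ is unitary, that each $C_a$ is a signed permutation with $C_a^T=-C_a$, and how the six normalized factors assemble into $V$.
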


Proof:
$(\alpha\openone+\beta C)^\dagger(\alpha\openone+\beta C)
=(\alpha^2+\beta^2)\openone+\alpha\beta(C+C^T)
=(\alpha^2+\beta^2)\openone$; as a machine check, every eigenvalue
modulus equals $\rho^6$ to $10^{-15}$ over random momenta at four values
of $q$ \evM{}. Two closed forms are immediate: the node modulus
$|\lambda_0|^2=\rho^{12}=[(1-q)^2+q^2]^6$, and the exact periodicity
$U(k+\pi e_a)=U(k)$, since $E_a(k_a+\pi)=-E_a(k_a)$ and the sign cancels
in $T_a^2$ \evP{}. The linear splitting of the node doublet is proven
symbolically to be exactly isotropic, identically in $q$, with speed
$v(q)$ rising from $2/\sqrt3$ at $q\to0$ to a maximum $1.207$ near
$q=0.30$ ($v(0.08)=1.1806$) \evP{}; numerically the splitting isotropy
holds to $2\times10^{-4}$ over 50 random directions after $h\to0$
extrapolation \evM{}.

\subsection{The chiral anisotropy}
\label{sec:chiralaniso}

The exact isotropy of the node is a property of the leading order
alone. The splitting is
\begin{equation}
s(k)\;=\;2v\,|k|\;+\;c(q)\,\frac{k_xk_yk_z}{|k|}\;+\;O(|k|^3),
\label{eq:oddterm}
\end{equation}
the second term proportional to the lowest anisotropic invariant of
the chiral tetrahedral group: to this order the slope depends on the
direction only through $\hat k_x\hat k_y\hat k_z$ (exactly, for
$A_4$-related directions; to $O(|k|^4)$ otherwise), and
$c=42.7,\,20.0,\,10.5,\,8.6$ at $q=0.08,\,0.15,\,0.25,\,0.30$, growing
toward small density \evM{}. The quaternionic triple carries an
orientation ($C_xC_yC_z=+\openone$, so the composed cycle order gives
$C_zC_yC_x=-\openone$), so no signed coin relabeling
implements an odd axis permutation; the spectral symmetry group is
exactly $T\cong A_4$ (Sec.~\ref{sec:census}), and $A_4$ admits the
degree-3 invariant $k_xk_yk_z$: the algebra that forces the isotropic
leading cone forbids every symmetry that would exclude the odd term.
The term is a property of the entire class: the $4\times4$
signed-permutation group contains exactly $12$ elements squaring to
$-\openone$ (the left- and right-multiplication quaternion families)
and $48$ ordered anticommuting triples, and over every triple,
signed axis assignment, block ordering and balanced direction-table
combination that preserves an isotropic-leading node, the odd
coefficient never falls below $0.85$ of the production value \evM{}.
Legal deterministic counterterm layers---coin-conditional
transverse translation dipoles, which read no environment bits and
leave Theorem~\ref{thm:freshtape} untouched---couple to the
correct tensor structure with bounded strength: the best family
member removes $45\%$ of the coefficient, with the leading isotropy
intact and the residual still of rank 3 \evM{}.

Three consequences are quantitative. \emph{Scale}: the relative
velocity anisotropy over the sphere is
$c\,|k|/(3\sqrt3\,v)\simeq7.0\,|k|$ at $q=0.08$, of order one at the
smallest momentum of the $L=48$ production torus, and below $1\%$ only
for $L\gtrsim4400$: the isotropic cone is a $k\to0$ statement, and the
approach to it is one power of the lattice spacing slower than for a
factorized lattice fermion. \emph{The instrument}: the cone estimator
of Sec.~\ref{sec:quenchedcone} pools every direction with its full
$\pm$ star, and the star average annihilates each odd term identically
\evP{} (verified to $2\times10^{-17}$ \evM{}), so the measured ratios
are statements about the even sector and the $k\to0$ slope; the
direction-resolved ratios of the exact operator at the probe radii are
$r_{111}=1.116,\,1.207,\,1.366$ at $\delta=0.03,\,0.05,\,0.08$.
\emph{Cancellation}: the term is odd under spatial inversion, and the
inversion-doubled model of Sec.~\ref{sec:mass} cancels it exactly, at
every momentum and mass density. The term is tied to the net
chirality: an eight-component chirality-2
construction with full proper-octahedral covariance (two
same-orientation blocks exchanged by a legal block swap) removes it
from the dispersion only by moving it into the splitting of a nested
double cone that no coupling can close \evM{}; it vanishes at first
order precisely when the net chirality does.

\subsection{The gapless census}
\label{sec:census}

The gapless set of the unitary factor $V(k)=U(k)/\rho^6$ is charted by
a dense sweep of the reduced zone $[0,\pi)^3$: minimum pairwise
eigenvalue distances on a $64^3$ grid, Nelder--Mead refinement of every
local minimum below $0.15$ to machine precision, sphere-probe
classification of each refined degeneracy (an isolated point keeps a
finite gap in all directions; a line member has an antipodal zero
pair), and orbit grouping under the spectral symmetry group. That group
is determined empirically: of the 48 signed coordinate permutations,
exactly the 12 proper rotations of the chiral tetrahedral group
$T\cong A_4$ preserve the spectrum, and the 12 complementary operations
(including $k\to-k$) map it to its conjugate---no odd permutation
appears, reflecting the cyclic $x\to y\to z$ layer ordering. For every
isolated node the chirality is \emph{computed} as a Chern number: the
Fukui--Hatsugai lattice Berry flux of the upper doublet member over an
enclosing sphere, built from the spectral projectors of $V$ (well
defined by Proposition~\ref{prop:factor}) and calibrated on
$\exp(+ik\cdot\sigma)$; every quoted charge is an exact integer, stable
under two sphere radii and two angular grids \evM{}.

\begin{table}[!htb]
\caption{The gapless census of the coincube walk (chord weights,
$q=0.08$): five point orbits (44 isolated Weyl points), three nodal
lines, and their triple junction. Quasienergies are $+$Im-branch values;
charges are computed Chern numbers per node.}
\label{tab:census}
\begin{ruledtabular}
\footnotesize
\begin{tabular}{lcccc}
orbit & rep.\ $k/\pi$ & $\omega/\pi$ & mult. & $\chi_{+}/\chi_{-}$\\
\colrule
corner & $(0,0,0)$ & $0.1006$ & $1$ & $-1$ / $+1$\\
half-points & $(\tfrac12,0,0)$+perms & $0.9221$ & $3$ & $+1$ / $-1$\\
$\langle111\rangle$ octet A & $(0.265,0.265,0.735)$ & $0.5142$ & $8$ & $-1$ / $+1$\\
$\langle111\rangle$ octet B & $(0.243,0.243,0.757)$ & $0.5652$ & $8$ & $+1$ / $-1$\\
generic orbit & $(0.028,0.401,0.902)$ & $0.9479$ & $24$ & $-1$ / $+1$\\
nodal lines & $(t,\tfrac12,\tfrac12)$+perms & $t$-dep. & $3$ & ---\\
$R$ junction & $(\tfrac12,\tfrac12,\tfrac12)$ & $1$ & $1$ & $U=-\rho^6\openone$
\end{tabular}
\end{ruledtabular}
\end{table}

Table~\ref{tab:census} is the census. Beyond the corner and the
half-points the zone carries 40 further Weyl points: two
$\langle111\rangle$ octets with members $(\pm t,\pm t,\pm t)$ mod $\pi$
and a fully generic 24-point orbit. Multiplicities count distinct
momenta; the corner and half-points host both branches at each
momentum, while the octets and the generic orbit host one
band-touching event per point, so the per-branch event counts are
$(1,3,4,4,12)$ and the point charges on the $+$Im branch sum to
$S_+=-1+3-4+4-12=-10$ (and $S_-=+10$)---a bookkeeping total over the
$+$Im quasienergy window, which is a convention, not a band; the
invariant statement is the conjugation pairing below. Every other
local minimum of the grid-basin search refines to a nonzero gap
(smallest avoided crossing $0.019$ at $q=0.08$, $0.050$ at $q=0.15$,
and $0.0013$ at $q=0.30$---the weakest completeness margin, at the
density where the topology itself changes): to those resolutions the
lists are complete. The same five-orbit topology with
identical charges holds at chord $q=0.15$ and at the two-boundary
(arc) weights at $q=0.08$---the extra species are not artifacts of the
damping factorization---but it is not permanent in $q$: the generic
orbit annihilates pairwise on the $\{k_a=0\}$ planes at
$q_c\in(0.292,0.294)$, leaving a 20-point census with $S_+=+2$ at
$q=0.30$ \evM{}.

Two facts govern the charge balance. First, the eigenphases of $V(k)$
cover the entire circle---a 360-bin occupancy histogram over a $40^3$
momentum grid has no empty bin---so $V$ has no Floquet quasienergy gap
anywhere: the static Nielsen--Ninomiya~\cite{NielsenNinomiya} argument
applies to no branch or quasienergy window, while the Floquet-extended
zero-sum rule~\cite{HigashikawaUeda,BesshoSato}, which fixes the total
chirality of a gapless unitary at its bulk winding number $W_3[V]$, is
satisfied identically: all layers are real, so $V(-k)=V(k)^*$, under
which $W_3$ is odd, hence $W_3[V]=0$ \evP{}---precisely the
conjugation-pairing total below. $S_+=-10$ violates nothing. The one
exact global constraint is that pairing: $U(-k)=U(k)^*$ pairs every
event $(k,\omega,\chi)$
with $(-k,-\omega,-\chi)$---verified event by event---and the
two-branch total vanishes identically. Second, the nodal lines: the
loop Berry phase around a line is close to $\pi$ but \emph{not}
quantized ($0.954\pi$--$0.998\pi$ along the line; no protecting
symmetry was identified), and no closed Gauss surface enclosing one
line while avoiding the gapless set exists, since the three lines
intersect at the $R$ point; the loop-phase winding along the full line
is consistent with zero at the smaller tube radius but is
radius-unstable, so no monopole charge is assigned to the lines---and
none is needed: the only surviving global constraint, $W_3[V]=0$, is
already saturated by the point charges' conjugation pairing \evM{}.

All corner-local measurements below probe the doublet at $k=0$,
$\omega_0=0.1006\pi$. The nearest other gapless feature of any kind
lies at torus distance $0.4134\pi$ from the corner (a generic-orbit
member; the octets follow at $0.4216\pi$ and $0.4590\pi$, the
half-points at $\pi/2$, the nodal lines at $0.7071\pi$), and the
nearest spectator point node in quasienergy is $0.414\pi$ away, so
corner-local fits at $|k|\lesssim0.15$ are separated from every
spectator by more than two decades in momentum. But the walk is a
lattice fermion with a larger spectator census than a corner count
suggests: its additional species are displaced, not
absent~\cite{DoublingQW}.

\subsection{Quenched measurement}
\label{sec:quenchedcone}

On the streaming (quenched) environment the propagator is measured from the
exact per-medium single-particle signed field, averaged over media. The
instrument fits the full $4\times4$ one-cycle Bloch matrix per momentum
from all four basis launches, pools symmetry-equivalent momenta at the
level of characteristic-polynomial coefficients (basis invariant, hence
unbiased) before rooting, probes momenta off the lattice grid inside the
node's linear range, and extrapolates $\delta k\to0$. The $\pm$-star
pooling annihilates the odd anisotropy of Sec.~\ref{sec:chiralaniso}
identically, so all ratios below probe the even sector and the $k\to0$
slope. Every production run
carries an annealed known-answer row through the identical pipeline; a run
whose gate row fails is discarded. Near-degenerate pole extraction is the
dominant threat (eigenvalues of a noisy nearly defective matrix split as
$\sqrt{\text{noise}}$), which this design removes.

\begin{figure}[!htb]
\includegraphics[width=\columnwidth]{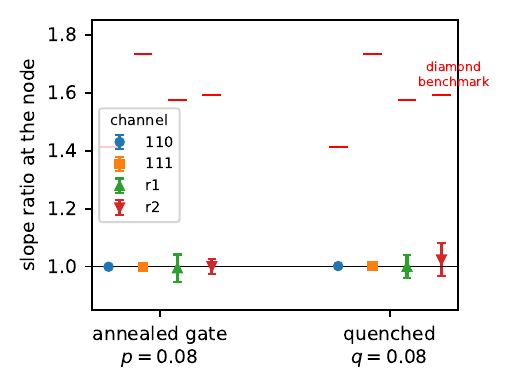}
\caption{Slope ratios at the node, annealed gate row and quenched row,
against the factorized-transport benchmark. Error bars are
leave-one-block-out jackknives over ten independent medium blocks; by
Theorem~\ref{thm:freshtape} the two rows measure the same operator.
\evX{}}
\label{fig:cone}
\end{figure}

Results (Fig.~\ref{fig:cone}): the annealed gate row reproduces the
exact operator (cubic-channel ratios within $10^{-3}$ of unity;
off-symmetry channels within their own statistics; node modulus
$0.619\pm0.005$ vs $0.620$, frequency $0.334\pm0.032$ vs $0.316$), and
the quenched row lands on the same
operator, as Theorem~\ref{thm:freshtape} demands: node modulus
$0.6300\pm0.0054$ ($+1.6\%$, $1.8\sigma$; an independent-seed
replication at doubled $R$ returns $0.6180\pm0.0038$, $-0.4\%$---the
excursion is statistical) and frequency
$0.3205\pm0.0067$ ($+1.4\%$, $0.7\sigma$) against the closed forms,
with slope ratios $r_{110}=1.002\pm0.002$, $r_{111}=1.003\pm0.003$,
$r_{\mathrm a}=1.000\pm0.040$ and $r_{\mathrm b}=1.024\pm0.057$ for the
two off-symmetry directions. The factorized-transport (diamond)
benchmark is excluded in every channel---an exclusion of the benchmark
\emph{value}, not of a competing automaton, since a fully factorized
walk supports no measurable doublet at all (Sec.~\ref{sec:damping}):
the weakest exclusion---by construction the off-symmetry channel with
the fewest orbit members---sits $10.0$ of its error bars from the
diamond value, and the cubic channels sit more than $160$ error-bar
widths away. The estimator's power to resolve anisotropy is established
by the positive controls: end to end on the off-symmetry channels (a
legal broken-coin walk under the same schedule, recovered within
errors, anisotropy resolved at $8.6\sigma$/$11.6\sigma$), and for the
cubic stars---which orbit pooling symmetrizes for cubic-breaking
truths---against a synthetic diamond-splitting truth through the
identical analysis. Absolute slopes carry the large medium-ensemble and
fit-window uncertainty seen throughout (block jackknife $\pm0.08$ on
the gate row and $\pm0.16$ on the quenched row); the
two rows scatter around the operator value $1.18$ accordingly, and the
node parameters and ratios are the precision observables. \evX{}

\subsection{Helicity residues}

At the node the effective generators $h_a$ on the doublet satisfy the exact
Clifford algebra $\{h_a,h_b\}=2v^2\delta_{ab}$; because the node frame of
the damped (non-normal) operator is not unitary, the Pauli coefficients of
$h_a$ are complex with \emph{bilinear} orthogonality
$R^TR=v^2\openone$ ($R\in O(3,\mathbb C)$), and
$\det R=-v^3$ exactly: chirality $\chi=-1$ \evM{}. The split branches'
spectral projectors equal the helicity projectors
$\tfrac12(\openone\pm n\cdot\sigma)$, $n=Ru/\sqrt{(Ru)\!\cdot\!(Ru)}$.
Measured on the quenched vacuum (the 26 directions of the cubic
$\{100\}/\{110\}/\{111\}$ stars at each of the three---by
$\pi$-periodicity identical---corner representatives; gate row passed):
chirality $-1$, stable under every jackknife resampling in both rows;
pointwise helicity-direction agreement $|1-n\cdot n_{\rm exact}|$ of
mean $0.010\pm0.002$, with the gate row returning $0.013\pm0.008$ on
the same operator. The bilinear isotropy statistic (a maximum over the
nine entries of $R^TR/s-\openone$, hence positively biased) is reported
only as an upper bound: the quenched row gives $0.060\pm0.018$ against
the gate row's $0.065\pm0.017$---the two rows are statistically
identical, as Theorem~\ref{thm:freshtape} demands, and both sit at the
instrument's noise floor, bounding the node's bilinear anisotropy at
$\lesssim0.1$ \evX{}. (The residue
projectors are rank one by construction of the spectral estimator, so
``purity'' is not reported as a measurement.) One convention is essential for eigenvector work: the
$e^{-ik\cdot x}$ Fourier contraction estimates $U(-k)=\bar U(k)$, whose
upper-half eigenvectors belong to the opposite-helicity conjugate doublet;
residue measurements must use $e^{+ik\cdot x}$.

\section{Exact quantum mechanics}
\label{sec:exactqm}

\subsection{Complex structure}

Let $P$ be the lifted carrier particle--hole conjugation, the Majorana
string $\prod_i(a_i+a_i^\dagger)$ over carrier modes. In the segregated
ordering, $P$ is real orthogonal with $P^2=+1$ (carrier mode number
$M=4L^3\equiv0 \bmod 4$), and
\begin{equation}
[\hat S, P]=0
\end{equation}
\emph{exactly, including all lift signs}, as an operator identity on the
full Fock space at every lattice size \evP{}: a string-factorization
lemma reduces each layer's commutation with the Majorana string to a
fixed finite block check (conversion and imprint blocks act on same-site
modes, so their Jordan--Wigner spectators are block internal), and the
translation lift contributes
$\mathrm{sign}(\pi)=(-1)^{(L-1)L^2}=+1$ for every $L$
(Appendix~\ref{app:proofs}). Independent sector certificates confirm the
identity directly through the three-carrier sectors at $L=2,3$
($204\,156$ three-carrier states at $L=3$, $1.6\times10^6$
state--draw evaluations; at odd $L$, where the even-$L$ pair-swap
batch is undefined, the certificate streams by a single plane
transposition---the sign identities do not depend on the
schedule), the $M{-}2$ sector by
particle--hole duality, and one- and two-particle sectors at $L=4$, over
eight environment draws \evM{}. With the grading $\eta=\mathrm{sign}(N_c-M/2)$---conserved
by every layer, odd under $P$, and size independent away from half
filling---define
\begin{equation}
K=\mathrm{diag}(\eta), \qquad I = P\,\mathrm{diag}(\eta).
\end{equation}
Then $K^2=1$, $I^2=-1$, $\{K,I\}=0$, $I$ antisymmetric, and $[\hat S,I]=0$;
with multiplication $(\alpha+i\beta)q:=\alpha q+\beta Iq$ and the induced
Hermitian form, $\hat S$ is a unitary operator on a complex Hilbert
space~\cite{Wetterich2022}: the automaton's quantum mechanics is exact
\evM{}. On the substrate the construction extends to the half-filled sector
by an explicit orbit-wise grading; in three dimensions no size-independent
grading of the half-filled sector was found and the question is left open.
Away from half filling, all carrier sectors are covered at every size, by
proof and by direct verification.

\subsection{Bridge to the spectral analysis}

The complex structure above is global (a Majorana string), and the
spectral results of Sec.~\ref{sec:spectrum} use the ordinary Fourier $i$.
These are the same quantum mechanics, by an explicit intertwiner: the map
$\Phi(u+iv)=u\oplus(-Pv)$ is a complex-linear isomorphism from the
complexified one-particle sector with the Fourier $i$ onto the
$(V_1\oplus V_{M-1},I)$ eigensector of the automaton's Hilbert space,
intertwining the dynamics (this is exactly $[\hat S,P]=0$ restricted to
the sector) and the Hermitian forms. Consequently the measured propagator
\emph{is} an $I$-Hermitian matrix element, every $\pm\omega$ conjugate
pair of the Fourier analysis appears once as a particle and once as an
antiparticle in the $(K,I)$ picture, and the chirality bookkeeping of
Sec.~\ref{sec:spectrum} is the particle/antiparticle structure of a
single Weyl node. Two scope remarks are owed. The complex structure is
nonlocal---a Majorana string mapping $k$-particle to $(M{-}k)$-particle
sectors---so no local occupation observable commutes with $I$; the
identification of measured objects with $I$-Hermitian matrix elements is
established here for the one-particle sector, and multi-carrier
correlators (such as the $C_2$ certificates of
Sec.~\ref{sec:interaction}) are not re-expressed in the $I$ picture.
The bridge's content follows from $[\hat S,P]=0$, which is proven at
every size (Appendix~\ref{app:proofs}); the direct machine check---at
$L=3$, a sign-sensitive size (Sec.~\ref{sec:coincube}), over three
environment draws including the working density---confirms the
intertwining identities exactly (deviation zero at machine level) and
the form isometry to $9\times10^{-15}$ \evP{}, \evM{}. What remains genuinely distinct is the damping: the
in-state reduced dynamics is a contraction (the chord), and only the
two-boundary amplitudes of Sec.~\ref{sec:damping} are unitary (the arc);
the bridge identifies the kinematics, not the dissipative structure.

\subsection{Grassmann action}

The local factors and actions of all blocks are extracted mechanically in
exact rational arithmetic by the pipeline of
Ref.~\cite{Wetterich2022}, Eqs.~(51)--(58), validated term by term on the
closed-form interaction action of Ref.~\cite{Wetterich2022b}%
\footnote{The validation surfaced one typographical error in
Ref.~\cite{Wetterich2022b}: the coefficient of the top monomial in the
closed-form interaction action, Eq.~(38) there (source label CS9;
equation number confirmed by compiling the arXiv source), must be $-2$,
not $-1$; the printed value fails the defining relation $e^{-L}=K$ at
that single monomial. The recomputation is committed as
\texttt{wetterich\_eq38\_check} \evM{}.}. For the coincube conversion blocks the physical (Givens-lift)
gauge yields an axis-universal action: 40 terms with identical degree
structure for all three axes; the quadratic part is the signed-swap
bilinear of $C_a$ plus the environment transport term, and the
environment control appears as quartic and higher couplings between the
channel bilinears and $e^\prime\tilde e$ \evM{}. Axis universality is a
property of the physical gauge specifically: the bare (all-plus sign)
gauge yields axis-dependent term counts ($49/42/51$)---discarding the
lift signs breaks the relabeling equivalence, a gauge artifact asserted
as such \evM{}. The streaming layers contribute the standard quadratic
transport bilinears; one fresh-tape batch's lift equals the composition
of its three single layers sign for sign, and its action is again a
single transport bilinear of the composed permutation \evM{}.

\section{Mass: inversion doubling}
\label{sec:mass}

Doubling the coin with a mass bit $b_m$ whose value \emph{reverses all
direction assignments}, $d^{(8)}_a=d_a\otimes\mathrm{diag}(1,-1)$, with
conversions blind to $b_m$, places two opposite-chirality copies of the
node at the same momentum and frequency (the $b_m{=}1$ sector is the
spatial-inversion image; sector chiralities $\mp1$ verified) \evM{}. The
alternative doubling $C_a\otimes Z$ fails: $-C_a$ generates the
opposite triple orientation, which is the \emph{other} corner frequency
family, so the two sectors share no node. A mass layer
$M=(1-q_m)\openone+q_m C_m$ with $C_m=\openone_4\otimes XZ$ (a
$b_m$-flipping controlled Givens, driven by a fourth environment field of
density $q_m$, streamed at three phase-continuing swaps per cycle along a
fixed axis so that its reads are also fresh---a once-per-cycle field
outruns the carrier iff its speed is at least $3$, and the fresh-tape
identity then covers the massive model; on the torus the mass tape
advances three sites per cycle against a carrier bound of two, so its
first wrap re-read requires a cycle separation of $\lceil L/5\rceil=10$
at $L=48$, beyond the $T=6$ window used; the odd per-cycle batch count
makes the massive and interacting rules homogeneous with period two in
time, the free rule remaining period one) couples the chiralities. Two operator identities carry the
sector \evP{}: $U(k)=(\openone\otimes R_m)[U_4(k)\oplus U_4(-k)]$ and, at
the node, $U(0)=U_4(0)\otimes M_2$ with $M_2$ a scaled rotation of the mass
bit; hence the gap is \emph{exactly}
\begin{equation}
2m \;=\; 2\arctan\!\frac{q_m}{1-q_m},
\label{eq:massgap}
\end{equation}
independent of $q$, with the multiplet center unshifted; frequencies add
angles as in a telegraph process, and $m=q_m+O(q_m^2)$ recovers the Kac
form~\cite{Kac}. On the node space the generators obey the full Dirac--Clifford
algebra $\{h_a,h_b\}=2v^2\delta_{ab}$, $\{h_a,\beta\}=0$, $\beta^2=1$,
identically in $q$ \evP{}, so
$\omega=\omega_c\pm\sqrt{m^2+v^2k^2}$ at leading order with the
finite-$k$ composition law
$\cos(\omega-\omega_c)=\cos m\,\cos\varphi_{\rm massless}(k)$,
$\varphi$ the massless branch phase---accurate to $8\times10^{-4}$ at
$|k|=0.15$, the residual being the inter-sector spinor mismatch
\evM{}.
No real Dirac mass exists at four components: the anticommutant of the
quaternionic triple $\{C_a\}$ in $M_4(\mathbb R)$ is zero-dimensional
\evM{} (for a $\mathrm{Cl}(3,0)$ triple with squares $+1$ the
anticommutant is two-dimensional with strictly negative squares---
wrong-sign masses only); eight components are minimal, as realized here.

The census of Sec.~\ref{sec:census} has a sharp massive fate, charted
by the same sweep at $q\in\{0.08,0.15\}$, $q_m=0.05$ \evM{}. The mass
layer breaks all axis permutations (exactly the 8 diagonal sign flips
survive as spectral symmetries) and makes the spectrum self-conjugate
at every $k$, so both quasienergy branches live at the same momenta.
The \emph{resonant principal quartet}---the corner and the three
half-points, the self-conjugate nodes with $2k\equiv0$ (mod $\pi$),
where $\omega\equiv-\omega$---is gapped by exactly
$2\arctan[q_m/(1-q_m)]$ (two exactly two-fold levels split by $2m$ to
$10^{-15}$). \emph{Every one of the 40 extra Weyl points survives
exactly ungapped}: for each, the original doublet at $\omega$ and its
inversion partner at $-\omega$ re-close at a common point (coinciding
to $6\times10^{-14}$, displaced below $0.005\pi$ from the massless
location) with unchanged integer charges. The mechanism is
off-resonance: inversion doubling places the partner at the conjugate
quasienergy $-\omega\neq\omega$, and the mass layer gaps only resonant
sector pairs. Survivor charges sum to zero per branch. The massive
model also adds structures with no massless counterpart: degeneracy
curves confined to the $\{k_a\in\{0,\pi/2\}\}$ mirror planes (where the
inter-sector mass matrix element vanishes by symmetry), isolated Weyl
nodes pinned at the resonant quasienergies $\omega\in\{0,\pi\}$ (16
events at $q=0.08$, charges summing to zero), and a 12-line edge
network that remains exactly degenerate, with the original line
doublet split only weakly ($2m/9.8$ at $(0.7,\pi/2,\pi/2)$, $q=0.08$),
leaving sub-gap remnants. The massive model is therefore a Dirac
quartet embedded in a spectator population that stays massless.

Inversion doubling also cancels the chiral anisotropy of
Sec.~\ref{sec:chiralaniso}: the term is odd under the
inversion that relates the two sectors, and the Dirac branches are
symmetric functions of the sector pair, so on the doubled model every
single-axis sign flip leaves the branch dispersion invariant to
machine precision---below $10^{-15}$ at every radius to $|k|=1.2$ and
every mass density tested, against a relative splitting difference
rising through $13\%$ at $|k|=0.02$ and $41\%$ at $|k|=0.08$ for the
undoubled model; the transposition-chiral remnant is
$8\times10^{-4}$ at $|\delta k|=0.08$ \evM{}. What survives is even in
$k$: the family-ordered curvature spread of
Fig.~\ref{fig:mass} is the even remnant of the same term, exact for
the automaton; its magnitude sets the remaining distance to a
Lorentz-symmetric dispersion (open problem (iii),
Sec.~\ref{sec:discussion}).

\begin{figure*}[!t]
\includegraphics[width=0.85\textwidth]{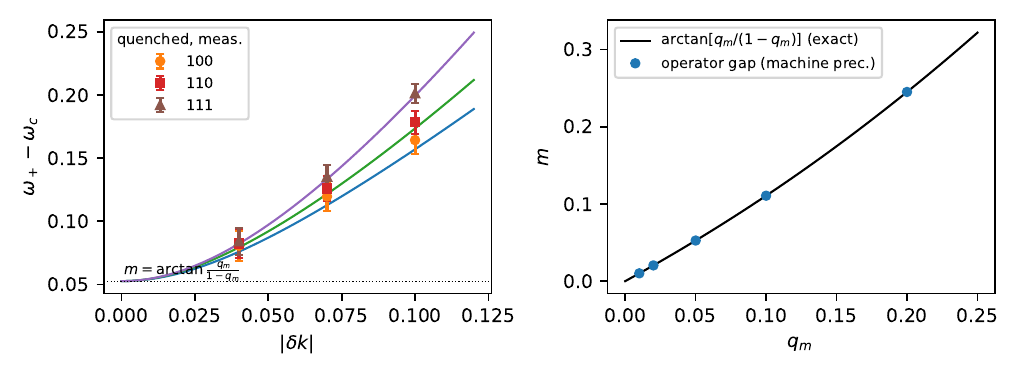}
\caption{Left: measured quenched massive dispersion (markers,
$q=0.08,\ q_m=0.05$) on the exact operator branches (lines); the family
ordering of the curvatures is the exact model's. Right: the operator gap
against the closed form, Eq.~(\ref{eq:massgap}), agreeing to machine
precision. \evX{}, \evP{}}
\label{fig:mass}
\end{figure*}

Measured on the quenched vacuum with the gated eight-launch instrument
(Fig.~\ref{fig:mass}): the gate row returns $m=0.0522\pm0.0056$ on the
exactly known operator (closed form $0.0526$; the replication's gate
row returns $0.0569\pm0.0026$ on the same operator, so the half-range
estimator's upward bias is of the size of the statistical error and
the mass central values are bias limited), and the quenched row
returns
$m=0.0569\pm0.0040$ ($+1.1\sigma$; the gap estimator is a half-range
over noise-split pole phases and is upward biased under noise, to
which part of this excursion is attributable; an independent-seed
replication at doubled $R$ returns $0.0557\pm0.0041$, $+0.8\sigma$)
with multiplet center
$\omega_c=0.3224\pm0.0103$---consistent with the closed forms within
statistics, as Theorem~\ref{thm:freshtape} extended to the massive model
demands---and the dispersion lies on the exact operator branches with
the family ordering of the curvatures that the operator prescribes
\evX{}.

\section{Interaction}
\label{sec:interaction}

For fixed media the multi-carrier sector is exactly determinantal
(Sec.~\ref{sec:coincube}), so genuine interaction requires
\emph{back-reaction}. The imprint layer L4 acts once per cycle: at sites
where an autonomous four-valued field fires (pair $(xy)$, $(yz)$ or $(zx)$
with probability $g/3$ each) \emph{and} the site's carrier fermion parity
is odd, the selected two environment species at that site are swapped.
The imprint field is autonomous in the same sense as the media: its
randomness lives in the initial product measure only (value $0$ with
probability $1-g$; the four values are carried as two bits), and it
evolves by legal deterministic pair-swap streaming at three
phase-continuing swaps per cycle along a fixed axis---a once-per-cycle
field outruns the carrier iff its speed is at least $3$ (a rotating-axis
schedule would need at least $7$)---lifting with the standard reordering
parity like every environment layer. A per-cycle resampled field
would be a stochastic rule---an annealed approximation, not an
automaton---and is not used.
Three design facts govern the layer. (i)~A fixed swap \emph{pair}
breaks cubic symmetry at $O(g)$ (anisotropic node corrections at
$5.9\sigma$ and $12\sigma$ in the two independent channels) \evX{}; the
rotating triple is protected. In the model defined here the protection
is complete and needs no symmetry argument: the interaction law,
Eq.~\eqref{eq:intlaw} below, makes the interacting cycle operator a
\emph{scalar} multiple of the free one at every cycle, so every
spectral property---the isotropy of the leading cone and the anisotropy
structure of Sec.~\ref{sec:chiralaniso} alike---is independent of $g$
\evP{}. (Under generic schedules, where the law holds only at leading
order, full quenched $A_4$ covariance is \emph{not} established: the
double axis reversals are verified quenched per mode \emph{including}
the imprint back-reaction \evM{}, cyclic axis rotations are verified at
the annealed-operator level \evM{}, single-axis reversals are provably
not symmetries---no signed relabeling exists among all 384---and the
measured interacting node isotropy under the generic schedule is
constrained only at the few-percent level. None of this is needed for
the model of this paper.)
(ii)~Parity control (not occupancy) is particle--hole even, so the complex
structure of Sec.~\ref{sec:exactqm} survives at $g>0$ unchanged \evM{}.
(iii)~The lift gauge of the imprint is \emph{physical}: the permutation
lift ($|11\rangle\to-|11\rangle$) and the Givens lift differ by a
parity-controlled sign with identical classical dynamics but per-event
coherent deficits $2q^2$ versus $2q(1-q)$; we fix the permutation lift
(minimal decoherence) \evM{}.

Certificates. The connected two-particle correlator
$C_2=G_2-\det G_1$ is nonzero already at $g=0$---the model is an
interacting carrier--environment theory at the correlator level, with
free-fermion structure only per medium realization---so the interaction
certificate is differential: $\Delta C_2(g)=C_2(g)-C_2(0)$ vanishes
identically at $g=0$ and switches on linearly,
$\Delta C_2/g\simeq0.35$ on the exact substrate (a one-dimensional ring
with a fixed swap pair and a static imprint field---an interaction
certificate, not the production three-dimensional layer) \evM{}. A control-variant
experiment (occupancy-controlled imprint, used only as an instrument
control) separates the parity-blocking contact artifact: $2$--$3\%$ of the
signal \evM{}.

Write events and the flush. At $g>0$ the imprint swaps two environment
bits at the carrier's site---a \emph{write}---and its permutation-lift
sign reads both bit values into the amplitude. Freshness must therefore
cover writes. The read$\to$write direction is closed by the streaming
alone (between any conversion read and a later write the bit has
outrun the carrier: writes only ever touch never-read bits, so every
lift sign-read is fresh), but a written or sign-read bit could be
consulted at the very next read. The model therefore includes one
post-imprint \emph{flush} batch per environment field each cycle (the
same certified layer type, origins continuing the field's phase), after
which every written bit moves at least three sites before any field is
read again: all reads are fresh at all times, and each cycle's imprint
contributes the exact scalar $(1-g)+g(1-2q^2)$, independently of
everything else. The parity control makes the vacuum
imprint-transparent, and
\begin{equation}
U_g(k)=(1-2gq^2)\,U(k)\qquad\text{exactly, at every cycle}
\label{eq:intlaw}
\end{equation}
\evP{}: the cone, residues and mass are untouched, at a damping cost
$\Gamma_{\rm int}=2q^2g$ per cycle ($\sim0.3\%$ of the free width at
$q=0.08$, $g=0.1$). Machine check, exact at $T=3$: all $2^{18}$
conversion histories $\times\,4^3$ imprint outcomes ($1.7\times10^7$
leaves), bits tracked through the streaming \emph{and} through the
write swaps, lift signs integrated exactly---
$G_{\rm quenched}(t)=(1-2gq^2)^t\,G_{\rm annealed}(t)$ endpoint by
endpoint to $2.1\times10^{-15}$, with zero correlated-read events of any
kind; without the flush the law is exact at the first cycle but fails
from the second through the identified sign-read channel (maximum
endpoint deviation $1.1\times10^{-3}$), which is why the flush is part
of the model \evM{}. One finite-size scope applies: with the flush each
environment field advances nine sites per cycle---three batches, two
from its own axis substeps and one from the flush---against a carrier
bounded by two sites per cycle along the streaming axis, so the
free-model horizon $T\le\lceil L/8\rceil$ does not transfer. The
closing speed $11$ gives $L>11T+6$ as a sufficient condition, and
enumerating the wrap condition on the real substep word sharpens it:
for $T\le12$ no wrap re-read exists once $L>11T-8$, the largest
wrap-admitting size being $22$ at $T=3$ \evM{}. Both
computations that invoke the law on a torus clear it with margin (the
exact enumeration at $L=48$ and the Monte Carlo run at $L=40$, both
$T=3$). The identity that anchors the measurement: the
media-ensemble walker average \emph{equals} the canonical
coherent-vacuum correlator exactly \evM{}. At Monte Carlo scale, paired
common-noise walkers (exact $g=0$ gate: the branches agree bit for bit)
confirm the law at every retained cycle within the flip-shot-noise
tolerance, with momentum spread at the noise floor ($\le4\times10^{-4}$
wherever all probe momenta are retained; at the last cycle the
amplitude gate retains a single probe momentum). At these parameters
the shot-noise tolerance exceeds the law's total deviation from unity,
so the run's measured content is the isotropy and the per-cycle
consistency; the law's \emph{magnitude} rests on the exact
enumeration above---isotropy survives interaction exactly, by theorem
and by measurement \evX{}.

\section{Damping: a theorem and its unitary completion}
\label{sec:damping}

\begin{figure}[!htb]
\includegraphics[width=0.75\columnwidth]{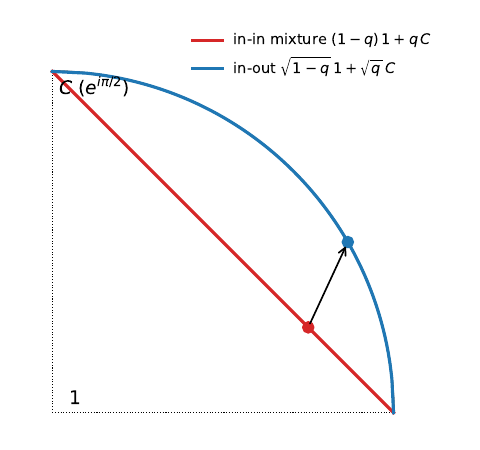}
\caption{Geometry of Theorem~\ref{thm:qbound}. Probabilistic (in-state)
mixtures of the identity and a quaternion unit lie on the chord (damped
interior); the two-boundary weights lie on the unitary arc. The isotropic
node splitting is a property of the whole family
$E_a(\alpha\openone+\beta C_a)$, for all weights.}
\label{fig:chord}
\end{figure}

The dressed quasiparticle of the classical in-state observable is damped:
$\Gamma(q)=-3\ln[(1-q)^2+q^2]$ exactly, with quality factor
$Q=\omega_0/\Gamma$ between $0.60$ and $0.90$ across the working range,
\emph{not} improving as $q\to0$ inside the relativistic window (both
$\omega_0$ and $\Gamma$ scale with $q$), and bounded at every density
by Proposition~\ref{prop:qglobal} below. Proposition~\ref{prop:factor}
states precisely what kind of damping this is: $U(k)=\rho^6V(k)$ with
$V(k)$ unitary, so $\Gamma=-3\ln\rho^2$ is a
mode-blind, $k$-independent \emph{interference-visibility} decay---every
coherent mode decays at the same rate relative to the conserved
probability sector, whose Perron eigenvalue in the doubled description
remains exactly $1$. It is not a broadening of the node relative to any
other mode. Nor is it a removable normalization: the in-state
normalization is fixed by probability conservation, and the
visibility-to-probability ratio is physical. In the fresh-tape model
there is no mode-selective decoherence at all
(Theorem~\ref{thm:freshtape}); under generic schedules it resides in
the re-read corrections (Sec.~\ref{sec:generic}). The overdamping is
global, not a property of the working density:

\begin{proposition}[Global overdamping]\label{prop:qglobal}
At $k=0$ each sub-step factor is $\rho$ times a unit quaternion of
angle $\theta(q)=\arctan[q/(1-q)]$, and the cycle is $\rho^6$ times
left multiplication by their product; the bi-invariant triangle
inequality for rotation angles on unit quaternions gives
$\omega_0\le6\,\theta(q)$, hence
\begin{equation}
Q\;=\;\frac{\omega_0}{\Gamma}\;\le\;\frac{2\,\theta(q)}{-\ln\rho^2(q)}
\;\le\;\frac{\pi}{2\ln2}\;=\;\kappa
\label{eq:qglobal}
\end{equation}
at every density, the last step because the middle bound is increasing
in $q$ with value $\kappa$ at $q=\tfrac12$. \evP{} The model's actual
supremum is sharper: $Q(q)$ increases monotonically with
$\sup_qQ=\pi/(3\ln2)=1.511$ approached as $q\to\tfrac12$ \evM{}, so
the in-state excitation retains at most $e^{-2\pi/Q}\le2^{-6}$ of its
visibility per oscillation period, at every density
($Q(0.08)=0.66$).
\end{proposition}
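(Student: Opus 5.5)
At $k=0$ we have $E_a(0)=\openone$, so each sub-step factor is $T_a(0)=(1-q)\openone+qC_a$. I would work inside the real algebra generated by $\openone$ and the triple $\{C_a\}$. Since the $C_a$ anticommute pairwise and square to $-\openone$, and $C_xC_yC_z=+\openone$, this algebra is a copy of $\mathbb H$. Left multiplication $x\mapsto C_ax$ realizes it on $\mathbb R^4$, and it acts there by orthogonal matrices up to scale. So I write
\begin{equation}
(1-q)\openone+qC_a=\rho\,(\cos\theta\,\openone+\sin\theta\,C_a),\qquad \tan\theta=\frac{q}{1-q},
\end{equation}
with $\rho^2=(1-q)^2+q^2$. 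This is the claimed form: $\rho$ times a unit quaternion of angle $\theta$. The cycle $U(0)=T_z^2T_y^2T_x^2$ is then $\rho^6$ times the product $\mathbf u$ of six unit quaternions, each of angle $\theta$. This is consistent with Proposition~\ref{prop:factor}.

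Next I identify $\omega_0$. A unit quaternion $\cos\phi+\sin\phi\,\hat n$ acts on $\mathbb R^4$, complexified, with eigenvalues $e^{\pm i\phi}$, each doubly degenerate. The node doublet at $k=0$ therefore has phase $\omega_0=\phi(\mathbf u)$, where $\phi\in[0,\pi]$ is the rotation angle, i.e. $\cos\phi=\mathrm{Re}\,\mathbf u$. The function $d(\mathbf a,\mathbf b)=\phi(\mathbf a^{-1}\mathbf b)$ is the bi-invariant geodesic distance on $S^3$, so it satisfies $\phi(\mathbf a\mathbf b)\le\phi(\mathbf a)+\phi(\mathbf b)$. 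Applying this five times to the six factors gives $\omega_0\le6\theta$. Combining with $\Gamma=-3\ln\rho^2$ gives the middle bound $Q\le2\theta/(-\ln\rho^2)$.

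The last step is the claim that
\begin{equation}
f(q)=\frac{2\arctan[q/(1-q)]}{-\ln[(1-q)^2+q^2]}
\end{equation}
is increasing on $(0,\tfrac12]$. At $q=\tfrac12$ it equals $2(\pi/4)/\ln2=\kappa$. I expect this monotonicity to be the only real obstacle. My first attempt would substitute $t=\tan\theta$, so that $q=t/(1+t)$ and $\rho^2=(1+t^2)/(1+t)^2$, and then show that the derivative of $\theta/\ln[(1+t)^2/(1+t^2)]$ is positive. That reduces to $\theta\,g'(t)\le g(t)/(1+t^2)$ with $g(t)=\ln[(1+t)^2/(1+t^2)]$, and $g'(t)=2(1-t)/[(1+t)(1+t^2)]$, so the inequality becomes
\begin{equation}
h(t)=\ln\frac{(1+t)^2}{1+t^2}-\frac{2(1-t)\arctan t}{1+t}\ge0.
\end{equation}
We have $h(0)=0$, so I would finish by showing $h'(t)\ge0$ on $[0,1]$. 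If that turns messy, a fallback is that $\theta$ is concave in $q$ and $-\ln\rho^2$ is convex in $q$ on $[0,\tfrac12]$, with both vanishing at $0$. Then $\theta/q$ is decreasing and $-\ln\rho^2/q$ is increasing, but that pushes the ratio the wrong way. So I expect the direct derivative computation, possibly closed with a numerical check of the single-variable inequality on $[0,1]$, to be needed.

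The sharper claims, that the supremum is $\pi/(3\ln2)$ and that $Q$ is monotone, are flagged \evM{} in the statement. I would only indicate the check at $q=\tfrac12$: there the six factors have angle $\pi/4$ and their product has angle $\pi/2\cdot\ldots$, computed explicitly from the orientation $C_zC_yC_x=-\openone$. These need not be proven.
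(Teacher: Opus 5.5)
Your argument is correct, and up to the middle bound it is the paper's proof. The shared steps are the factorization $(1-q)\openone+qC_a=\rho(\cos\theta\,\openone+\sin\theta\,C_a)$, the $k=0$ cycle written as $\rho^6$ times a product of six unit quaternions of angle $\theta$, the bound $\omega_0\le6\theta$ from the bi-invariant triangle inequality on $S^3$, and division by $\Gamma=-3\ln\rho^2$.

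You part ways at the last step. The paper only machine-checks, on a dense grid, that $2\theta/(-\ln\rho^2)$ increases on $(0,\tfrac12]$. Your substitution $t=\tan\theta$ turns this into an actual proof, and an easy one. On differentiating $h$, the $(1-t)$ terms cancel exactly and $h'(t)=4\arctan t/(1+t)^2>0$. So $h>0$ on $(0,1]$, and the middle bound is strictly increasing.

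Your fallback has both convexities reversed. $\theta'(q)=1/\rho^2$ increases on $[0,\tfrac12]$, so $\theta$ is convex, while $(-\ln\rho^2)''=-8q(1-q)/\rho^4<0$, so $-\ln\rho^2$ is concave. Hence $\theta/q$ increases and $(-\ln\rho^2)/q$ decreases, and the fallback proves the monotonicity as well. Equivalently, $\theta'/(-\ln\rho^2)'=1/[2(1-2q)]$ is increasing, and since both functions vanish at $q=0$ the monotone l'H\^opital rule applies.

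Two smaller points. First, your endpoint remark should say that at $q=\tfrac12$ each normalized factor squares exactly to $C_a$. The cycle quaternion is then $C_zC_yC_x=-\openone$, of angle $\pi$, so $Q\to\pi/(3\ln2)$. Second, the $2^{-6}$ per-period clause rests on this sharper, machine-checked supremum; the proven bound $\kappa$ alone gives only $e^{-2\pi/\kappa}=2^{-4}$.
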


That the visibility
decay cannot be tuned away is structural:

\begin{theorem}[Q-pinning, quaternionic event class]\label{thm:qbound}
Let every conversion event lie in the quaternionic set
$\mathbb R\cdot Q_8=\{\pm\openone,\pm C_x,\pm C_y,\pm C_z\}$. Then for
any product of convex mixtures of events---non-commuting factors and
arbitrarily correlated weights included---every eigenvalue $\lambda$ of
the composite obeys
$\mathrm{dist}(\arg\lambda,(\pi/2)\mathbb Z)\le\kappa\,(-\ln|\lambda|)$
with $\kappa=\pi/(2\ln2)$ exactly. Moreover, under three further
hypotheses---(i) events in $\mathbb R\cdot Q_8$ controlled by
per-engagement reads of binary media; (ii) the generic-schedule read
geometry, in which the two same-axis engagements read lattice-adjacent
sites with both offsets realized across coin branches; (iii) a
deterministic
environment rule with randomness confined to the initial measure---an
isotropic node forces $\Gamma>0$: node unitarity requires an
almost-surely deterministic conversion word; among media factorizing
over axis lines, the media with deterministic read-window counts are
exactly the per-line crystal products, and every member destroys the
cone. \evP{}
\end{theorem}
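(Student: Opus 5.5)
The plan is to reduce the eigenvalue bound to a statement about a \emph{single} convex mixture of quaternion units, and then to read the second half of the theorem off the equality case of that bound. The representation in Eq.~(\ref{eq:tables}) is left multiplication on $\mathbb H\cong\mathbb R^4$: $C_x,C_y,C_z$ play the roles of $i,j,k$ (the orientation $C_xC_yC_z=+\openone$ only fixes a convention). Hence $\mathbb R\cdot Q_8$ is a group and its real span is $\mathbb H$. Let $h$ be any quaternion, with modulus $|h|$ and rotation angle $\theta(h)\in[0,\pi]$ defined by $\mathrm{Re}\,h=|h|\cos\theta$. Left multiplication by $h$ then has eigenvalues $|h|e^{\pm i\theta}$, each twice.

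\emph{Step 1 (one mixture suffices).} Take a product of convex mixtures, with arbitrarily correlated weights. Expanding it gives an expectation of words in $Q_8$, and every such word is again an element of $Q_8$. So the composite is itself a single convex mixture $h=\sum_{g\in Q_8}p_g\,g$. Correlations change only the weights $p_g$, not the form. It therefore suffices to prove
\[
\mathrm{dist}\bigl(\theta(h),(\pi/2)\mathbb Z\bigr)\le\kappa\,(-\ln|h|)
\]
for a single mixture.

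\emph{Step 2 (single-mixture inequality).} This is the main obstacle. Let $g_0$ carry the largest weight. Left-translating by $g_0^{-1}$ changes $\theta$ only by a multiple of $\pi/2$ and leaves $|h|$ unchanged, so we may take $g_0=\openone$. Write $h=a+w$, with $a=p_{\openone}-p_{-\openone}$ and $w=\sum_a (p_{C_a}-p_{-C_a})\,e_a$. Then $\tan\theta=|w|/a$ and $|h|^2=a^2+|w|^2$, subject to $a+\|w\|_1\le1$. Since $|w|\le\|w\|_1$, the ratio $\theta/(-\ln|h|)$ is maximized when the mass sits on a single imaginary unit. In that case $|w|=\|w\|_1$, cancelling pairs only shrink $|h|$, and $\theta$ is pushed toward $\pi/4$ at fixed norm. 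This leaves the two-point family $(1-p)\openone+pC$, for which
\[
F(p)=\frac{\min(\theta,\pi/2-\theta)}{-\tfrac12\ln[(1-p)^2+p^2]},\qquad \theta=\arctan\frac{p}{1-p}.
\]
By the monotonicity already used in Proposition~\ref{prop:qglobal}, $F$ is maximized at $p=\tfrac12$, where it equals $(\pi/4)/(\tfrac12\ln2)=\kappa$. This shows the constant is exact. I would first attempt the reduction to the two-point family with a Lagrange-multiplier argument on the simplex, which is a compact face-by-face check with eight weights. If that proves messy, the fallback is a direct two-variable inequality in $(a,|w|)$ over the triangle $a+|w|\le1$. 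One hedge: a bi-invariant route also exists. The bi-invariant distance to $Q_8$ dominates $\mathrm{dist}(\theta,(\pi/2)\mathbb Z)$ and is subadditive under products, because $Q_8$ is a group. Step 1 makes this route unnecessary, but it would remain available if correlations had to be handled factor by factor.

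\emph{Step 3 (unitarity forces a deterministic word).} At $k=0$ the motion factors are diagonal signs. Hence the node eigenvalue is $\pm|h|e^{\pm i\theta}$ for the mixture $h$ of Step 1, taken over the law of conversion words. The unit ball of $\mathbb H$ is strictly convex and the elements of $Q_8$ are extreme points, so $|h|=1$ forces $p_g=\delta_{g,g^*}$. That is, the conversion word is almost surely deterministic, and so $\Gamma=0$ is possible only in that case. Under hypotheses (i)--(iii), the word is a fixed function of the read bits. By (ii), the two adjacent same-axis reads, with both offsets realized, make the count of engaged bits in every adjacent two-site window along an axis line almost surely constant. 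By (iii), the medium is stationary and its law is fixed at $t=0$. For media factorizing over lines, a deterministic count in every adjacent window pins each line to a constant or alternating configuration, which is exactly the per-line crystal. For a crystal the realization is deterministic and periodic, so after enlarging the unit cell the rule is a homogeneous legal one-particle signed permutation. Theorem~\ref{thm:finiteray} then gives a monomial Bloch operator and finitely many group velocities, so no cone survives. The delicate point in this last step is the combinatorial classification: turning ``deterministic read-window counts'' into ``crystal''. This uses both offsets of (ii) to rule out aperiodic sequences with constant pair sums.
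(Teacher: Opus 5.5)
Step~1 matches the paper's convolution argument. The normalization in Step~2 is false, however, and the rest of Step~2 depends on it. For $g_0\in Q_8$ one has $\cos\theta(g_0^{-1}h)=\langle g_0,h\rangle/|h|$, so $\theta(g_0^{-1}h)$ is the angle in $\mathbb R^4$ between $h$ and $g_0$. It agrees with $\pm\theta(h)$ modulo $\pi/2$ only when $h$ lies in the plane spanned by $\openone$ and $g_0$.

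A concrete counterexample: $h=\tfrac14\openone+\tfrac14C_x+\tfrac12C_y$ is heaviest on $C_y$. It has $\cos\theta(h)=1/\sqrt6$, so $\mathrm{dist}(\theta(h),(\pi/2)\mathbb Z)\approx24.1^\circ$. But $\cos\theta(C_y^{-1}h)=\sqrt{2/3}$ gives $\theta(C_y^{-1}h)\approx35.3^\circ$. A bound for the translated mixture is therefore not a bound for the original one.

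You could try to keep the normalization, since the translated angle does dominate the folded one. You would then have to prove the unfolded inequality $\theta\le\kappa(-\ln|h|)$ for every mixture heaviest on $\openone$. That includes angles above $\pi/4$, which the two-point family $(1-p)\openone+pC$, $p\le\tfrac12$, never reaches, so your reduction does not cover them.

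You also maximize the unfolded ratio $\theta/(-\ln|h|)$ and fold only afterwards, in $F$. Without the normalization that ratio diverges near the vertices $\pm C_a$, so the fold has to come first.

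The clean repair is the paper's own proof, and it needs no normalization. For the original $h$, $|a|+|w|\le|a|+\lVert w\rVert_1\le1$, so the eigenvalues $a\pm i|w|$ lie in the square with vertices $\pm1,\pm i$. At fixed $\arg\lambda$, the folded ratio $\mathrm{dist}(\arg\lambda,(\pi/2)\mathbb Z)/(-\ln|\lambda|)$ increases with $|\lambda|$, so its supremum lies on the boundary. The symmetries $\lambda\mapsto\bar\lambda$ and $\lambda\mapsto-\bar\lambda$ reduce the boundary to the chord $1\to i$. On that chord your $F$ is symmetric about the midpoint and increasing below it: it is the function $g(q)$ of Proposition~\ref{prop:qglobal}, whose monotonicity the paper also checks only densely. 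No Lagrange analysis over the eight weights is needed.

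Step~3 takes a genuinely different route from the paper. Your extreme-point argument ($|h|\le\sum_g p_g|g|=1$, with equality only for a point mass) is the same extremality fact as the paper's $|\mathbb E[i^N]|^2=(p_0-p_2)^2+(p_1-p_3)^2\le1$, stated for the whole $Q_8$-valued word rather than a count mod~$4$. Closing with Theorem~\ref{thm:finiteray} on a periodic deterministic medium is more conceptual than the paper's ring enumeration ($L=6,7,8,10$) followed by explicit deterministic-word spectra.

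The step from deterministic counts to a periodic realization has a hole, though. You put both reads of a window on one axis line. Then a medium whose lines alternate with independent random phases has deterministic counts, yet its realizations are almost surely non-periodic across lines, so Theorem~\ref{thm:finiteray} does not apply to it. The paper handles exactly this phase-per-line family separately. Under cross-streaming the two same-axis reads sample distinct transverse lines, so independent phases give fluctuating counts, with ensemble damping $\Gamma=-3\ln[r^2+(1-r)^2]$, backed by the $2^3$ supercell enumeration. In that geometry a window sums bits from two distinct lines. For line-factorizing media such a sum is deterministic only if each read bit is, and your crystal-plus-finite-rays conclusion then goes through.

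Two smaller points:
\begin{itemize}
\item A deterministic product word forces deterministic per-axis counts only if the axis fields are independent, since correlated fluctuations can cancel in $Q_8$ (e.g.\ $C_z^2C_x^2=\openone$).
\item ``Both offsets'' are not what excludes aperiodic lines. A line whose adjacent-pair sums all equal one constant is automatically constant or of period two.
\end{itemize}
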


(For the fresh-tape coincube itself no such hypotheses are needed:
$\Gamma=-3\ln\rho^2>0$ holds exactly by Proposition~\ref{prop:factor}
and Theorem~\ref{thm:freshtape}; the second statement addresses whether
\emph{any} read geometry and medium in the class could do better, and
answers no within its hypotheses.)

The restriction to the quaternionic class is necessary: for
transposition events on three or more modes (eigenvalue angles
$\{0,\pi\}$) products of mixtures develop three-cycle eigenvalues near
$e^{2\pi i/3}$, and the phase-per-decay ratio is unbounded---for the
product $[(1-\epsilon)S_{12}+\epsilon\openone]
[(1-\epsilon)S_{23}+\epsilon\openone]$ of two transposition mixtures it
is $4.95$ at $\epsilon=0.05$ and $131$ at $\epsilon=0.002$; for event
sets with finest eigenvalue angle $2\pi/\ell$ the constant grows as
$\kappa(\ell)\to2\ell/\pi$ \evM{}. The bound is a statement about the
quaternionic class, degrading linearly with coin refinement. In these
terms the theorem reads: no legal in-state automaton of this class can
hold its node quasienergy farther than $\kappa\Gamma$ from
$(\pi/2)\mathbb Z$. The bound governs the node---momenta where the
transport phases vanish, so the composite stays inside the group
algebra; at generic $k$ the momentum factors $E_a(k)$ leave the event
class and the ratio exceeds $\kappa$ already on the coincube operator
(reaching $6.06$ at $q=0.02$, $k=0.4\,e_x$) \evM{}. And ``an isotropic
node forces $\Gamma>0$'' asserts a nonzero uniform visibility decay,
not a node width.

The media hypothesis of the second statement is likewise scoped: for
general three-dimensional translation-invariant media the
deterministic-count classification is open---a family staggered along
the read axis with an independent phase per transverse line lies
outside the crystal list. It does not rescue the cone. Cross-streaming
makes a carrier's two same-axis reads sample \emph{distinct} transverse
lines (adjacent under the generic schedule; the argument needs only
distinctness), so per-line determinism does not give ensemble
determinism: for
independent phases of density $r$ the ensemble one-cycle operator is
damped by exactly $\Gamma=-3\ln[r^2+(1-r)^2]$ per cycle, vanishing only
at the crystal limits $r\in\{0,1\}$; and an exhaustive supercell
construction (all 4096 members of the family on the $2^3$ supercell,
each exactly unitary and closed under the model's streaming) finds
every one of its 11\,864 degenerate band groups failing an
isotropic-pair witness at the folded corner momenta---a witness that
does detect the unitary-arc cone when fed one. Interior supercell
momenta are not exhaustively excluded. \evM{}

The idealized ordered-dimer environment does achieve the commutator-limited
decoherence $\Gamma=k^2/2$ per event---the branch-remerging mechanism is
real---but its velocity fan collapses to the diamond: within the legal
class, cone and in-state unitarity are mutually exclusive. Within the
quaternionic class,
\emph{an isotropic cone forces overdamping}---$\Gamma>0$ with
$Q\le\pi/(2\ln2)$ at every density (Proposition~\ref{prop:qglobal}):
the excitation always retains less than $2^{-6}$ of its visibility per
oscillation period, an $e$-fold lost per tenth of an oscillation at
the working density---\emph{and in-state unitarity forces factorized
transport}; both halves are proven above. The postselected two-boundary
family below is the dichotomy's only known evasion; under the fresh
tape it is exact at all times.

The evasion is the two-boundary object---a \emph{postselected}
amplitude, fixed by a uniform final environment boundary in the
transfer-matrix formalism, not the evolution of a prepared state;
whether the induced two-boundary weight is non-negative, as a
probabilistic-automaton observable requires, is not addressed here and
joins the open problems. The isotropic node splitting
holds for the whole family $E_a(\alpha\openone+\beta C_a)$, and by the
identity of Proposition~\ref{prop:factor},
$T^\dagger T=(\alpha^2+\beta^2)\openone$ for every real weight pair (the
two-boundary walk is the $\alpha^2+\beta^2=1$ member of the same
family): \emph{every} product final boundary
$(w_0,w_1)$ yields zero relative damping and an isotropic node, landing on
the unitary arc at $s^2=w_1^2q/[w_0^2(1-q)+w_1^2q]$ \evP{}. The
completions thus form a one-parameter family; in-state weights $(1-q,q)$
lie on the damped chord, and the square-root point
$(\sqrt{1-q},\sqrt q)$ (Fig.~\ref{fig:chord}) is selected uniquely as
the only \emph{medium-blind} member: among product final boundaries,
$w_0=w_1$ is the unique choice invariant under the medium's bit-flip
involution---a criterion on the boundary weights alone---and the
$\sqrt q$ structure of the induced transfer pair is its consequence,
not its definition \evP{}. Numerically: all
$|\lambda|=1$ to $4\times10^{-16}$ with slope isotropy asserted below
$10^{-3}$ and $\omega_0$ tunable at the $\sqrt q$ scale \evM{}. In
the fresh-tape model the two-boundary amplitude inherits the
no-re-read property of Theorem~\ref{thm:freshtape} (the same path sum,
different boundary weights), so the in--out family is exactly unitary
at \emph{all} times---machine check: the exact quenched in--out path
sum under the fresh schedule equals the arc-operator power to
$1.6\times10^{-15}$ at every cycle ($L=24$, $T=3$, three momenta)
\evM{}; under generic schedules the fresh-read law is
exact at the first cycle only, with re-read corrections growing to
$11$--$24\%$ by three cycles at $q=0.08$. What remains open is not the
survival of the object but its interpretation: the positivity of the
induced two-boundary weight (Sec.~\ref{sec:discussion}).
\evP{}, \evM{}

\section{Discussion}
\label{sec:discussion}

Under the strictest single-fermion contract, the construction half of
the three-dimensional problem posed in Ref.~\cite{Wetterich2022} is
solved exactly: a legal probabilistic cellular automaton whose dynamics
is exactly unitary quantum mechanics, whose quenched single-particle
propagator \emph{is} its Bloch operator theory
(Theorem~\ref{thm:freshtape}), and whose single-carrier excitation on a
clustering product vacuum is a Weyl fermion with an exactly isotropic
leading cone---dispersion and spinor residues, exact by theorem and
confirmed by gated measurement---extendable to a massive Dirac quartet
with an exact tunable mass, and to a certified media-mediated
interaction whose damping law is exact at every cycle. The
Lorentz-symmetric half of the problem is delivered at leading order
only: beyond it stands the structural chiral anisotropy of
Sec.~\ref{sec:chiralaniso}, whose mechanism, exhaustive
unavoidability, and partial remedies this paper charts. Every closed form is a statement
about the automaton itself, every measurement runs behind a
known-answer gate, and the estimator's anisotropy resolution is
established by positive controls. The solution is a lattice fermion
with a computed spectator census: the sweep of Sec.~\ref{sec:census}
places 43 further Weyl points and three nodal lines at momentum
distance $\ge0.41\pi$ from the working node, and the massive model gaps
only the resonant quartet among them.

The precise open problems are: (i) the half-filled carrier sector of the
complex-structure construction in three dimensions; (ii) the positivity of the
induced two-boundary weight (its long-time survival is settled by
Theorem~\ref{thm:freshtape}; re-read resummation remains relevant only
for generic schedules);
(iii) the Lorentz-symmetric continuum limit: the chiral $O(|k|)$
anisotropy is structural at four components
(Sec.~\ref{sec:chiralaniso}), cancels at first order only on the
inversion-doubled branch, and is reduced but not removed by the legal
counterterm family ($45\%$) and by density optimization
($c\propto1/q$)---an exact cancellation mechanism, or an enlarged
construction class in which the term is not tied to the net chirality, is the open
problem, interaction corrections beyond the scalar law included; and
(iv) multi-carrier physics beyond the contact terms already present in the
extracted quartic action; and (v) the spectator population of the
massive census---the exactly massless off-resonant Weyl species, the
pinned nodes, and the sub-gap nodal-line remnants---and whether a legal
layer can gap it: the octet pair persists with constant opposite
chiralities across the entire density range, separated in both momentum
and quasienergy, so no static translation-invariant perturbation
suffices except near $q\approx0.339$ and $0.370$, where
opposite-chirality quasienergy crossings open momentum-only bridges
\evM{}. The damping of the in-state excitation is
not on this list: it is an exact, mode-blind prediction of the
construction (Proposition~\ref{prop:factor}) of stated
magnitude---$Q\approx0.66$ at the working density, $Q\le\pi/(2\ln2)$ at
every density---and the dichotomy, cone forces overdamping
(Proposition~\ref{prop:qglobal}) while in-state unitarity forces the
diamond (Theorem~\ref{thm:qbound}), stands as the
second principal result of this paper, with the postselected
two-boundary family as its only known evasion, exact at all times under
the fresh tape.

\begin{acknowledgments}
Derivations, code, verification protocols and drafts were prepared with
substantial assistance from the AI system Claude (Anthropic); all results
were machine-verified as described, and the author takes full
responsibility for the content. The complete verification code and the
manuscript source are available at Ref.~\cite{repo}.
\end{acknowledgments}

\appendix

\section{Proofs}
\label{app:proofs}

\subsection{Theorem \ref{thm:finiteray} (finite rays)}

Let the rule conserve particle number in the one-particle sector. Its
restriction to that sector is a signed permutation of the species basis
composed with site shifts, so the Bloch operator has matrix elements
$S(k)_{\alpha\beta}=s_{\alpha\beta}e^{ik\cdot d_{\alpha\beta}}$ with, by
the unique-jump property, exactly one nonzero entry in each row and column:
a monomial matrix. Its characteristic polynomial factorizes over the cycles
of the underlying permutation; a cycle of length $\ell$, accumulated
displacement $D$ and sign $\sigma$ contributes
$\lambda^\ell=\sigma\,e^{ik\cdot D}$, i.e.
$\omega_m(k)=(k\cdot D+\arg\sigma+2\pi m)/\ell$: every branch is exactly
linear with group velocity $D/\ell$, drawn from the finite set of cycle
data. A cone $\omega=v|k|$ requires a two-sphere of group velocities. The
deviation of any finite ray set from a sphere is scale invariant, so no
continuum limit removes it; and a periodic cycle of rotated rules is again
unique-jump (products of signed permutations are signed permutations), so
the theorem applies to the composite rule. \hfill$\square$

\subsection{Theorem \ref{thm:ntwo} and the accompanying search}

Three ingredients; the theorem is (ii), the rest is the search.
(i) \emph{Algebraic ceiling.} On
$\mathfrak{sl}_2(\mathbb R)$ every element satisfies
$X^2=-\det(X)\openone$ and anticommuting elements are orthogonal in the
polarization of $\det$; in the basis $\{X,Z,XZ\}$ the squares are
$+\openone,+\openone,-\openone$, so the form has signature $(2,1)$ and no
real two-dimensional representation of $\mathrm{Cl}(3,0)$ exists.
(ii) \emph{Generic weights.} A symbolic orthogonality lemma for the
mixture families shows that a degenerate propagating pair at any momentum
requires $(1-p)^2=p^2$ for rotation coins, is impossible for reflection
coins, and degenerates for diagonal coins: a linear node forces
$U(k_0)=\lambda_0\openone$, the splitting generators are conjugated
involutions with automatically scalar anticommutators ($2\times2$
Cayley--Hamilton), and the required bilinear orthogonality gives
$T_{00}T_{11}=-T_{01}T_{10}$. For the blocked two-origin cycles the
exclusion is exhaustive in the tables and heuristic in momentum: all
512 sign/direction designs per (placement, engagement, $p$) sector are
enumerated, with $p\in\{0.15,0.25,0.35,0.5\}$ and 14-start continuous
momentum solving, and every design carries an explicit certificate---a
candidate node entering the sector's anisotropy floor, a
degenerate-only scalar point, or no scalar point at all, the last
certified by a minimum scalar residual $\ge0.074$ against the $10^{-9}$
solver tolerance of solvable designs. The floors are claimed at the
sampled grid values of $p$ only; between them they dip toward zero on
approach to the tuned $p^*$, which is the content of part (iii), not a
violation.
(iii) \emph{The fine-tuned points.} At the singular weight $p=\tfrac12$
the single-origin placement has $U(k_0)$ scalar at
$k_0=(\tfrac{3\pi}4,\tfrac{3\pi}4,-\tfrac\pi4)$ (and its symmetry
images), and the three effective generators form an exact complex
Clifford triple, giving $\omega=\pi\pm|\delta k|$: an isotropic node with
every parameter frozen and $\Gamma=\tfrac32\ln2$ at the chord midpoint.
In the blocked placements, branch continuation in $p$ (warm-started exact
scalar solves with probe-radius-extrapolated anisotropy) locates
codimension-one transversal zeros of the single remaining Gram condition
at $p^*=0.3300$ (additive placement,
$k_0/\pi=\pm(0.916,0.916,-0.916)$, $|\lambda_0|=0.1736$, $\chi=-1$) and
$p^*=0.3204$ (multiplicative, $|\lambda_0|=0.1799$, $\chi=+1$), both
with $\omega_0=\pi$: isolated tuned nodes, not parameter families. All parts are machine-checked, the sweep exhaustively.
\hfill$\square$

\subsection{Theorem \ref{thm:freshtape} (fresh tape)}

\emph{Lemma A.} A pair-swap layer with origin $o$ moves the bit at 1D
site $y$ to $y+1$ if $y\equiv o\ (\mathrm{mod}\ 2)$, else to $y-1$
(finite check, four cases).

\emph{Lemma B (monotone transport).} Under any origin-alternating swap
sequence with phase $p$ (the $k$-th swap has origin $(p+k)\bmod2$), the
class $\epsilon=(-1)^{y+p}$ of a bit is conserved and its displacement
after $k$ swaps is exactly $\epsilon k$: by Lemma A the bit moves $+1$
iff its position parity matches the current origin, and both flip at
each step, so the direction is permanent. (A schedule restarting each
batch at origin $0$ breaks the alternation at the batch boundary; its
transport is $0$ per batch---the machine-checked negative control.)

\emph{Lemma C (slot counting).} Relative to the axis-$a$ field the
cycle is the periodic word $AASSXX$ ($A$: axis-$a$ substep---the path
reads the field at its own site, then the field streams one batch; $S$:
axis-$s_a$ substep; $X$: third axis). Between two $A$-slots separated
by $\Delta n$ batches the path takes at most $m\le\Delta n+1$
$s_a$-steps, and when $\Delta n$ is even, $m=\Delta n$ (exhaustive over
12 cycles).

\emph{Proof of (a).} Pair-swap layers permute the axis-$a$ field only
along $s_a$, so a re-read at substeps $\tau_1<\tau_2$ requires zero net
path displacement along both transverse axes and equal path and bit
displacements along $s_a$. By Lemma B the bit moves exactly $3\Delta n$
sites in a fixed direction; by Lemma C the path moves at most
$\Delta n+1$. Since $3\Delta n>\Delta n+1$ for every $\Delta n\ge1$,
the match fails at every separation---no residual window. As an
independent check that the reduction models the automaton faithfully,
the exact path enumerations (labels tracked through the real
permutations, all signs and weights) cover every separation
$\Delta n\le7$ and find zero re-reads with
$G_{\rm quenched}=G_{\rm annealed}$ to $10^{-16}$.

\emph{Proof of (b) and sharpness.} On the ring of circumference $L$ the
match condition becomes $\epsilon\,3\Delta n\equiv\Delta y\ (\mathrm{mod}\ L)$
with $|\Delta y|\le m$ and $\Delta y\equiv m\ (\mathrm{mod}\ 2)$;
transverse return along axis $a$ forces $\Delta n$ even, and then
$m=\Delta n$ by Lemma C. The first solution is
$\Delta n^*=2\lceil L/8\rceil$, so an evolution of $T$ cycles is
re-read-free iff $2T-2<2\lceil L/8\rceil$, i.e.\ iff
$T\le\lceil L/8\rceil$. The certificate reproduces $\Delta n^*$ from
the real tapes at $L=12,16,24,32$, and the full path enumeration at
$L=16$ exhibits the first wrap re-reads at $T=3=\lceil16/8\rceil+1$
with minimal separation exactly $4$ batches. \hfill$\square$

\subsection{Particle--hole equivariance, $[\hat S,P]=0$}

\emph{Lemma (string factorization).} Let $B$ be a parity-even operator
with support $T$, and $T_c$ the carrier modes in $T$. The Majorana
factors of $P$ anticommute pairwise, so sorting gives
$P=\sigma\,P_{T_c}P_{T_c^\complement}$ with a fixed reordering sign
$\sigma=\pm1$; a parity-even $B$ commutes with every Majorana off its
support, hence with $P_{T_c^\complement}$, and
$[B,P]=\sigma\,[B,P_{T_c}]\,P_{T_c^\complement}$: the block-local check
suffices.

\emph{Layers.} L2 acts, per fired site, on the four same-site carrier
modes (plus the control); the stored Jordan--Wigner spectators lie
inside the contiguous block, so $[{\rm L2},P]=0$ reduces to one finite
block check per axis, independent of $L$. L1 is a canonical permutation
lift, for which $U_\pi PU_\pi^{-1}=\mathrm{sign}(\pi)\,P$; per channel
the axis-$a$ shift is $L^2$ disjoint $L$-cycles, so
$\mathrm{sign}(\pi)=(-1)^{(L-1)L^2}=+1$ for every $L$ (even $L$: $L^2$
even; odd $L$: $L-1$ even). L3 is supported on environment modes and
parity even, so it commutes with every carrier Majorana. L4's carrier
part is a parity projector and its environment part a string-free
adjacent swap, block-checked in both lift gauges. Size independence is
structural: the same finite blocks recur at every site of every
lattice, so the identity holds on the full Fock space at every $L$.
\hfill$\square$

\subsection{The mass identities, Eq.~(\ref{eq:massgap})}

First, the factorization
$U(k)=(\openone\otimes R_m)[U_4(k)\oplus U_4(-k)]$ with
$R_m=(1-q_m)\openone+q_mXZ$: (1)
$d^{(8)}=d\otimes\mathrm{diag}(1,-1)$, so every transport phase is
block diagonal in $b_m$ with opposite momentum sign in the two blocks,
$E_8(k)=E_4(k)\oplus E_4(-k)$; (2) $C_8=C_4\otimes\openone$, so
conversions are block diagonal and identical in the two blocks; hence
the pre-mass cycle is $U_4(k)\oplus U_4(-k)$ and the once-per-cycle
mass layer is $\openone_4\otimes R_m$ (machine-checked symbolically in
the full vector momentum). Second, the gap:
at $k=0$ the momentum factors are the identity, so the direction reversal
that distinguishes the two $b_m$ sectors is invisible and the axis layers
act as $U_4(0)\otimes\openone_2$. The mass layer acts on the $b_m$ factor
alone: $M_2=(1-q_m)\openone+q_m XZ$, a real multiple of a rotation with
$\arg$-eigenvalues $\pm\arctan[q_m/(1-q_m)]$ and modulus
$\sqrt{(1-q_m)^2+q_m^2}$. Hence
$U(0)=U_4(0)\otimes M_2$: the spectrum is the product set, frequencies add
angles, the multiplet center is unshifted, and the gap is exactly
$2\arctan[q_m/(1-q_m)]$, independent of $q$. \hfill$\square$

\subsection{The imprint amplitude law}

In the one-particle sector the carrier's site has odd parity, so the
imprint fires iff $\iota\neq0$, probability $g$. The permutation lift
of the fired pair swap has sign table $|00\rangle\to+|00\rangle$,
$|01\rangle\to+|10\rangle$, $|10\rangle\to+|01\rangle$,
$|11\rangle\to-|11\rangle$ (the canonical transposition lift), and the
swap preserves the Bernoulli pair measure, so the per-event coherent
factor is the plain sign average $(1-q)^2+2q(1-q)-q^2=1-2q^2$,
identical for the three pairs. One line of expectation:
$U_g=(1-g)U+g(1-2q^2)U=(1-2gq^2)U$ per cycle; under the fresh-tape
model every read is fresh, so the law holds exactly at every cycle. The
Givens
lift's table gives $1-2q(1-q)$ instead: the law is lift-gauge
dependent, the isotropy of the damping is not. \hfill$\square$

\subsection{Proposition \ref{prop:qglobal} (global overdamping)}

At $k=0$ the momentum factors are the identity and each sub-step
factor is $(1-q)\openone+qC_a=\rho\,[\cos\theta+\sin\theta\,C_a]$ with
$\theta=\arctan[q/(1-q)]$: $\rho$ times the left-regular representation
of a unit quaternion of rotation angle $\theta$. The cycle at $k=0$ is
therefore $\rho^6L_u$ with $u=u_z^2u_y^2u_x^2$ a product of six unit
quaternions of angle $\theta$ each, and its eigenphases are
$\pm\Theta(u)$, the angle of $u$. The bi-invariant metric on the unit
quaternions ($S^3\cong SU(2)$, geodesic distance = rotation angle)
satisfies the triangle inequality
$\Theta(uv)\le\Theta(u)+\Theta(v)$, so $\omega_0=\Theta\le6\theta$.
Dividing by $\Gamma=-3\ln\rho^2$ gives
$Q\le2\theta/(-\ln\rho^2)=:g(q)$; $g$ is increasing on
$(0,\tfrac12]$ (machine-checked densely) with
$g(\tfrac12)=(\pi/2)/\ln2=\kappa$, giving Eq.~\eqref{eq:qglobal}. The
model's supremum $\pi/(3\ln2)$ follows from
$\omega_0\to\pi$ and $\Gamma\to3\ln2$ as $q\to\tfrac12$, with $Q(q)$
monotone on a dense grid \evM{}; the per-period statement is
$e^{-2\pi/Q}\le e^{-2\pi\cdot3\ln2/\pi}=2^{-6}$. \hfill$\square$

\subsection{Theorem \ref{thm:qbound} (Q-pinning)}

Any product of convex mixtures of the event set is itself a convex mixture
over the quaternion group $Q_8$ (the product measure is the convolution on
the group, so correlations between events are automatically included). In
the quaternion representation such a mixture is
$T=w\openone+x C_x+y C_y+z C_z$ with $|w|+\lVert(x,y,z)\rVert_1\le1$,
and its eigenvalues are $w\pm i\lVert(x,y,z)\rVert_2$. Since
$\lVert\cdot\rVert_2\le\lVert\cdot\rVert_1$, every eigenvalue lies
inside the convex hull of the chords between adjacent fourth roots of
unity. On the extremal chord $1\to i$, $\lambda(t)=(1-t)+it$, the
maximized quantity is
$\mathrm{dist}(\arg\lambda,(\pi/2)\mathbb Z)/(-\ln|\lambda|)$: by the
square-hull reduction and the $\pi/2$ rotation symmetry of the event
angles it attains its maximum $\pi/(2\ln2)$ at the chord midpoint
$t=\tfrac12$, where the distance folds from $\arg\lambda$ to
$\pi/2-\arg\lambda$. (The unfolded ratio $\arg\lambda/(-\ln|\lambda|)$
diverges toward the chord endpoint and is maximized nowhere interior; a
dense hull scan confirms no interior point exceeds $\kappa$.) This bounds
$\mathrm{dist}(\arg\lambda,(\pi/2)\mathbb Z)\le\kappa\Gamma$ for
arbitrary, non-commuting products---the full statement of the theorem.
The group closure is the load-bearing hypothesis: outside
$\mathbb R\cdot Q_8$ (already for transposition events on three modes)
products of mixtures leave the span of the event set, three-cycle
eigenvalues appear, and the measured phase-per-decay ratios grow without
bound, as reported in the main text.
For the second statement (under hypotheses (i)--(iii), the read-window
structure being hypothesis (ii)): unitarity of the node eigenvalue
requires the conversion count $N$ (mod 4) along a read window to be
deterministic, by
$|\mathbb E[i^N]|^2=(p_0-p_2)^2+(p_1-p_3)^2\le1$ with equality only at a
vertex; on rings, and on three-dimensional media factorizing over axis
lines, the media with deterministic window counts are exactly the
(per-line) crystals (exhaustively enumerated at $L=6,7,8,10$), with the
non-factorizing phase-per-line family closed separately as described in
the main text; and the deterministic-word cycles' spectra are computed
explicitly and are never isotropic. Machine checks accompany every step,
including randomized non-commuting products. \hfill$\square$

\section{Verification protocols}
\label{app:verify}

Every claim tagged \evM{} is an assertion in a runnable script; the
tolerances quoted below are the assertion thresholds, and each script is
self-checking (it fails loudly rather than reporting).

\emph{Layer lifts.} The conversion layer is verified densely on the
one-site Fock space (32 states): the controlled double Givens is a signed
permutation of the basis, unitary, commutes with fermion parity, is the
identity on the empty control sector, has single-particle matrix equal to
the signed-swap block of $C_a$, satisfies $G^2=-\openone$ on the
one-particle sector, and maps the doubly occupied pair to $+$ itself
(determinant one), for all three axes, to $10^{-12}$.

\emph{Composite sign coherence.} The full cycle is built as a signed
permutation of the many-body basis from the stored per-layer sign rules
(spectator Jordan--Wigner strings for conversions; inversion and wrap
parity for translations), in the segregated mode ordering. For random
environment configurations the zero-, one- and two-particle sectors are
extracted and compared: the one-particle sector must equal the stated rule
and the two-particle sector the antisymmetrized square $\Lambda^2M_1$,
exactly. This holds on the full Fock space of a ring substrate ($2^{15}$
states) and in three-dimensional geometry at $L=2$ (496 pair states) and
$L=3$ (5778 pair states), under both the fresh-tape and the generic
streaming schedules. Mutation controls certify sensitivity: deleting
the spectator string produces $8/496$ mismatches; deleting translation
parity produces $2772/5778$ at $L=3$ (counts are specific to the
committed environment draw) and---the recorded caveat---$0$ at
$L=2$, where two sub-steps per axis make net crossings even.

\emph{Complex structure.} $P$ is applied as the Majorana string with its
Jordan--Wigner signs; $[\hat S,P]=0$ is checked as an identity of signed
permutations (both orderings byte-identical), on the full substrate Fock
space and, in three dimensions, on the sectors
$\{0,1,2,3,M-2,M-1,M\}$ (three-carrier wedges, $1.6\times10^6$
state--draw evaluations) using particle--hole duality to reduce the deep sectors to one-
and two-hole problems; per layer and for the full cycle, under both
streaming schedules.
The unitarity of the complex picture is verified as invariance of the
induced Hermitian form on random vectors to $10^{-9}$.

\emph{Grassmann extraction.} Local factors are built from the block tables
over exact rationals; $e^{-L}=K$ is asserted exactly before any action is
reported, and the inverse map $K\to(\text{perm},\text{signs})$ must
round-trip. The pipeline is validated term by term against the closed-form
interaction action of Ref.~\cite{Wetterich2022b}.

Table~\ref{tab:claims} maps each principal claim to its verification
script in Ref.~\cite{repo}.

\begin{table}[!htb]
\caption{Principal claims and their verification scripts (all
self-asserting; in \texttt{scripts/} of Ref.~\cite{repo} unless noted).}
\label{tab:claims}
\begin{ruledtabular}
\scriptsize
\begin{tabular}{ll}
finite-ray tests & \texttt{tests/test\_finite\_ray\_*}\\
$v(q)$ dressing law & \texttt{theory\_linear\_law}\\
Abelian-gauge diamond & \texttt{w2\_scalar\_gauges}\\
coin scans ($n=2,4$) & \texttt{w3\_transfer\_scan}\\
annealed cone & \texttt{w3\_cone\_verify}, \texttt{theory\_real\_cone}\\
lift certificates & \texttt{w3c\_lift\_check}\\
composite sign coherence (1D ring) & \texttt{w3c\_composite\_check}\\
3D sign coherence; sectors & \texttt{theory\_3d\_certs}\\
$[\hat S,P]$ proof blocks & \texttt{theory\_sp\_proof}\\
fresh-tape theorem & \texttt{freshtape\_proof}\\
re-read kinematics & \texttt{reread\_kinematics}\\
chiral anisotropy; no-go & \texttt{theory\_anisotropy}\\
global overdamping & \texttt{theory\_q\_global}\\
fresh-tape schedule tests & \texttt{tests/test\_freshtape}\\
quenched cone (gated) & \texttt{w3c\_fresh}\\
generic-schedule cone/helicity & \texttt{w3c\_corner}, \texttt{w4\_helicity}\\
generic mass/interaction & \texttt{m8\_corner}, \texttt{i3\_quenched3d}\\
census, charges & \texttt{census\_sweep}\\
factorization; structural loci & \texttt{spectrum\_census}\\
positive controls & \texttt{w3c\_positive\_control}\\
positive control (fresh) & \texttt{w3c\_control\_fresh}\\
two-component taxonomy & \texttt{theory\_redteam}\\
helicity residues & \texttt{w4\_helicity\_exact}, \texttt{w4\_fresh}\\
complex structure & \texttt{e1\_complex\_structure}\\
substrate bridge & \texttt{bridge\_check}\\
Grassmann actions & \texttt{e2\_coincube\_actions}\\
Eq.~(38) erratum & \texttt{wetterich\_eq38\_check}\\
mass sector & \texttt{m8\_mass\_exact}, \texttt{m8\_fresh}\\
mass structure & \texttt{theory\_mass}\\
mass-gap identity & \texttt{theory\_mass\_identity}\\
interaction exact law & \texttt{freshtape\_interaction}\\
interaction measured & \texttt{i3\_fresh}\\
interaction certificates & \texttt{i2\_connected}, \texttt{i2\_split}\\
imprint-field theory & \texttt{theory\_interaction}\\
spectator flow (open problem v) & \texttt{orbit\_flow}\\
rationality of closed forms & \texttt{theory\_rationality}\\
Q-pinning & \texttt{theory\_q\_bound}\\
in--out amplitudes & \texttt{a\_inout}, \texttt{a\_inout\_3d}\\
in--out under fresh tape & \texttt{inout\_fresh}
\end{tabular}
\end{ruledtabular}
\end{table}

\section{Measurement pipeline}
\label{app:instrument}

All \evX{} results use one instrument family. The object is the exact
per-medium single-particle signed field (the single-carrier sector is
linear in the field for fixed media, so this is walker-exact with infinite
walkers), averaged over $R$ media; the media-ensemble average provably
equals the canonical coherent-vacuum correlator.

\emph{Bloch-matrix estimation.} All $n$ basis launches are evolved, giving
the full $n\times n$ propagator series $G(t,k)$; the one-cycle operator is
fit by least squares, $\hat U(k)=G_{t+1}G_t^+$ over a post-transient
window. Momenta are probed off the lattice grid by direct phase
contraction, with probe radii inside the node's measured linear range.

\emph{Pooling.} Naive eigenvalue extraction near a two-fold node is
$\sqrt{\text{noise}}$-ill-conditioned (a noisy nearly defective matrix
splits its double eigenvalue as the square root of the perturbation);
pooling is therefore performed on characteristic-polynomial coefficients
over the symmetry orbit of measurement momenta---a basis-invariant,
\emph{linear} (hence unbiased) average---before rooting. Slopes use
symmetric splittings at two step sizes with $\delta k\to0$ Richardson
extrapolation. Media are accumulated in independent blocks (ten for the
cone instrument, eight for the massive instrument, six for the helicity
instrument) and every final quantity (slope, ratio) is jackknifed at the
end-of-pipeline level over leave-one-block-out samples; the annealed gate
row's deviation from the exact operator is recorded per channel and
reported alongside as the instrument systematic---the quoted errors are
the block jackknives. Off-symmetry direction families are measured alongside the
cubic stars. The diamond-exclusion significance is computed in code for
every channel by the generic-schedule instrument; for the fresh-model
rows the quoted significances are the committed ratios' distances from
the benchmark values in units of their own jackknife errors---direct
arithmetic on the stored rows---with the weakest channel quoted. The
three campaign instruments draw their media from a common base seed
stream, so their quenched excursions are correlated across
instruments; independent-seed replications at doubled $R$ accompany
the campaign: the cone replication reproduces the quenched node at
$-0.4\%$ of the closed form and the mass replication returns the gap
at $+0.8\sigma$.

\emph{Gates.} Every production run carries an annealed known-answer row
through the identical pipeline; a failed gate discards the run. The
cone instrument gates both rows against the closed forms; the massive
and helicity instruments gate the known-answer row only, so their
quenched rows are protected by the gate row's validation of the
pipeline rather than by a bound of their own. Amplitude
gates exclude momenta near free-beat amplitude nodes for ratio estimators
(shot-noise gating alone is insufficient there). Eigenvector work uses the
$e^{+ik\cdot x}$ contraction (Sec.~\ref{sec:spectrum}). For interaction
ratios, paired common-noise evolution (identical randomness with the
imprint on and off) cancels the medium-ensemble variance, which is
otherwise dominated by low-order conversion caustics. Gate tolerances are
set to catch catastrophic estimator failure, not to certify accuracy:
the cone-ratio gates use $\max(2\%, 3\sigma)$ of the row's own
jackknife, the fresh instruments gate both rows' node parameters at
$\max(0.5\%,3\sigma)$, the massive-gap gate allows $30\%$ of the gap
with the multiplet center gated at $0.02$, and the generic instrument's
node-modulus gate is $0.05$ absolute. Accuracy statements never rest on gates;
they rest on the quoted statistical errors and the recorded systematics,
and the estimator's ability to resolve anisotropy is established by
positive controls (a synthetic diamond-splitting truth and a legal
broken-coin walk) run through the identical pipeline. A committed
broken run under the generic schedule (probe radii outside the node's
linear window) documents why the controls carry the burden: despite a
$2.4\times$ absolute-speed failure it returns cubic ratios of $1$ at
the $10^{-4}$ level---a ratio near $1$ is, by itself, a weak
discriminator.

\emph{Recorded failure modes.} The following estimator pathologies were
each encountered, diagnosed and firewalled during this program, and are
listed because they generically afflict lattice pole measurements:
(i) $\sqrt{\text{noise}}$ splitting at near-degenerate poles;
(ii) exceptional-point misidentification in damped operators (real double
eigenvalues accepted as propagating nodes);
(iii) through-origin fits across a packet's death into the noise floor;
(iv) ratio blowup at beat-amplitude nodes;
(v) probing outside the (renormalized) linear cone window;
(vi) branch mixing under scalar single-pole estimators at two-fold
degenerate nodes;
(vii) one-dimensional co-streaming substrates, whose companion locking
invalidates absolute calibrations that do transfer in three dimensions.

\section{Closed forms}
\label{app:closedforms}

\begin{table}[!htb]
\caption{Exact expressions for the model's free and dressed spectra
(exact for the automaton by Theorem~\ref{thm:freshtape}). All are
verified against the operators they describe to at least
$10^{-9}$; $D=\sqrt{1-q}+\sqrt q$.}
\label{tab:closed}
\begin{ruledtabular}
\scriptsize
\begin{tabular}{ll}
node modulus & $|\lambda_0|^2=[(1-q)^2+q^2]^6$\\
factorization & $U=\rho^6V$, $\rho^2=(1-q)^2+q^2$, $V(k)$ unitary\\
free width & $\Gamma(q)=-3\ln[(1-q)^2+q^2]$\\
cone speed & $v(q)\colon\ 2/\sqrt3 \to 1.207$ (max at $q\simeq0.30$)\\
dressing law (1D) & $v=2|1-2q|$\\
mass gap & $2m=2\arctan[q_m/(1-q_m)]$\\
mass-layer modulus & $\sqrt{(1-q_m)^2+q_m^2}$ per cycle\\
massive dispersion & $\cos(\omega-\omega_c)=\cos m\,\cos\varphi(k)$\\
odd anisotropy & $s-2v|k|=c(q)\,k_xk_yk_z/|k|$, $c(0.08)=42.7$\\
interaction law & $U_g(k)=(1-2gq^2)\,U(k)$\\
Q-pinning constant & $\kappa=\pi/(2\ln2)=2.2662$\\
global $Q$ bound & $Q\le2\theta/(-\ln\rho^2)\le\kappa$;
$\sup_qQ=\pi/(3\ln2)$\\
in--out transfer & $E_a[\sqrt{1-q}\,\openone+\sqrt q\,C_a]/D$,
$|\lambda|\equiv1/D$
\end{tabular}
\end{ruledtabular}
\end{table}

Table~\ref{tab:closed} collects the closed forms. Two remarks. The
in--out transfer's uniform modulus $1/D$ per read is a global scalar
(every path reads one bit per sub-step) and is absorbed into wave-function
renormalization; the physical statement is the vanishing of all
\emph{relative} damping. The quality factor of the in-state quasiparticle,
$Q=\omega_0/\Gamma$, rises from $0.60$ to $0.90$ over
$q\in[0.02,0.25]$, monotonically to $\pi/(3\ln2)$ at $q\to\tfrac12$
(Proposition~\ref{prop:qglobal}); the quantity
Theorem~\ref{thm:qbound} bounds by
$\kappa$ is $\mathrm{dist}(\omega_0,(\pi/2)\mathbb Z)/\Gamma$, which
coincides with $Q$ for $q\lesssim0.2$ (where $\omega_0<\pi/4$).

\end{document}